\newcommand{\F}{\vspace*{\smallskipamount}}
\newcommand{\FFF}{\vspace*{\bigskipamount}}
\newcommand{\B}{\vspace*{-\smallskipamount}}
\newcommand{\BB}{\vspace*{-\medskipamount}}
\newcommand{\BBB}{\vspace*{-\bigskipamount}}
\newcommand{\T}{\hspace*{2em}}
\newcommand{\TT}{\hspace*{4em}}
\newcommand{\TTT}{\hspace*{6em}}
\newcommand{\cA}{\mathcal{A}}
\newcommand{\cO}{\mathcal{O}}
\newcommand{\cW}{\mathcal{W}}
\newcommand{\mE}{\mathbb{E}}
\newcommand{\Paragraph}[1]{\BBB\paragraph{#1}}
\newcommand{\remove}[1]{}
\newlength{\pagewidth}
\newlength{\captionwidth}
\newcommand{\qed}{\hfill $\square$ \smallbreak}
\newenvironment{proof}{\noindent{\bf Proof:}}{\qed}
\newtheorem{theorem}{Theorem}
\newtheorem{lemma}{Lemma}
\newcommand{\Min}[2]{{\,\min\,[\,#1\, , \,#2\,]\,}}
\begin{document}

\thispagestyle{empty}

\baselineskip    3ex
\parskip         1ex

\title{Broadcasting Spanning Forests on a Multiple-Access Channel~\footnotemark[1]\FFF\FFF\FFF }

\author{Bogdan S. Chlebus~\footnotemark[2] 
\and
Karol Go\l{\k a}b~\footnotemark[3] 
\and
Dariusz R. Kowalski~\footnotemark[4]
}

\date{}

\maketitle

\footnotetext[1]{
This work was published as~\cite{ChlebusGK-TCS03}.
The results of this paper were presented in a preliminary form in~\cite{ChlebusGK-SIROCCO02}.}

\footnotetext[2]{
 	Department of Computer Science and Engineering,
 	University of Colorado Denver,
 	Denver, CO~80217, USA.
	Work supported by the National Science Foundation under Grant No. 0310503.}

\footnotetext[3]{
 	Instytut Informatyki,
 	Uniwersytet Warszawski,
 	Warszawa 02-097, 
	Poland.}

\footnotetext[4]{Department of Computer Science, 
University of Liverpool, 
Liverpool L69~3BX, United Kingdom.}

\vfill

\begin{abstract}
The problem of finding a spanning forest of a graph 
in a distributed-processing environment is studied.
If an input graph is weighted, then the goal is to find 
a minimum-weight spanning forest.
The processors communicate by broadcasting.
The output consists of the edges that make 
a spanning forest and have been broadcast on the network.
Input edges are distributed among the processors, with
each edge held by one processor.

The underlying broadcast network is implemented as a multiple-access 
channel.
If exactly one processor attempts to perform a broadcast,
then the broadcast is successful.
A message broadcast successfully is delivered to all the
processors in one step.
If more than one processors broadcast simultaneously, then the
messages interfere with each other and no processor can
receive any of them.

Optimality of algorithmic solutions is investigated,
by way of comparing deterministic with randomized algorithms, 
and adaptive with oblivious ones.
Lower bounds are proved that either justify the optimality 
of specific algorithms or show that the optimal performance 
depends on a class of algorithms. 
\end{abstract}

\vfill

\thispagestyle{empty}

\setcounter{page}{0}

\newpage

\section{Introduction}

We consider a distributed system in which processing units 
communicate by broadcasting. 
The underlying broadcast network is implemented 
as a multiple-access channel. 
The processing units are called stations.
A message sent at a step is successfully received by all the stations 
only if there is exactly one station that performs a broadcast
at this step.
We work with a model in which a collision, of the messages sent
simultaneously by different stations at a step, results in a feedback
that allows all the stations to detect the collision.

We study the problem of finding a spanning forest of a graph.
The input may be either a simple graph, with no weights assigned
to the edges, or it may have a weight given for each edge, then the
goal is to find a minimum-weight spanning forest.
The input is distributed among the stations: 
each edge is held by one station.
Some edges are heard on the channel during an execution of an algorithm, 
we require that {\em all\/} the edges of a spanning forest we seek
are revealed explicitly by having been heard at some step.

We consider two possible ways to obtain an input.
In a static case, the input is provided to the
participating stations at the start of an execution. 
In a dynamic case, there is an adversary that controls the timing 
of when the station holding a given edge 
wakes up and joins the process of computation.

The questions we address in this paper concern the time
efficiency of finding a spanning forest.
We compare adaptive and oblivious algorithms,
and deterministic with randomized ones, and prove lower bounds.
For a given class of algorithms, the goal is to identify the optimum time
performance of an algorithm in the class.
When comparing classes of algorithms, the question is whether 
the optimum performance differs among the classes.

Now we give a detailed overview of the results of this
paper.
Let~$n$ be the number of vertices and let~$m$ be the number of edges 
of an input graph, which also means that there are $m$~stations 
participating in the computation.
The size~$|T|$ of a spanning forest~$T$ is defined to be equal 
to the number of edges in~$T$. 
A deterministic algorithm is said to be {\em oblivious\/} if its 
actions are defined in advance for any input of a given size, 
otherwise an algorithm is {\em adaptive.}

\begin{enumerate}

\item[\bf I.] 
{\sf Simple graphs and deterministic algorithms.}
We give a deterministic {\em adaptive\/} 
algorithm that finds a spanning forest~$T$
in time $\cO(\Min{m}{|T|\log m})$. 
We prove that, for any number of edges~$m$ and
any {\em oblivious\/} deterministic algorithm~$\cA$, there exists 
some graph~$G_\cA$ with $\Theta(m)$~edges that makes~$\cA$ perform
$\Omega(m)$~steps on~$G_\cA$.
This shows that adaptive and oblivious algorithms have 
different optimum-time performances, for sufficiently many
edges in the output.

\item[\bf II.] 
{\sf Simple graphs and randomized algorithms.}
We give a randomized algorithm that finds a spanning
forest~$T$ in expected time proportional to its
size~$|T|$, which is optimal.

\item[\bf III.] 
{\sf Graphs with weights and deterministic algorithms.}
We prove that any deterministic adaptive algorithm requires
time $\Omega(m)$ to find a minimum-weight spanning forest.
This shows, if $m=\omega (n\log n)$, that deterministic 
adaptive algorithms can solve the problem of finding 
\emph{any} spanning forest of a simple graph faster than 
the problem of finding a minimum-weight spanning
tree of a graph with weights assigned to its edges.

\item[\bf IV.] 
{\sf Graphs with weights and randomized algorithms.}
We give a randomized algorithm that finds a minimum-weight spanning 
forest~$T$ in expected time $\cO(\Min{m}{|T|+W\log m})$, 
where $W$ is the number of different weights on the edges in~$T$.
This shows that, for the problem of finding minimum-weight spanning forests,
randomized algorithms are provably more efficient than deterministic 
ones when the number~$m$ of edges is $\omega(n\log n)$. 

\item[\bf V.] 
{\sf Graphs specified dynamically by adversaries.}
We develop a deterministic algorithm that finds a spanning forest
of a simple graph in time $\cO(|T|\log m)$, 
if an adversary can control the time when stations holding 
edges are activated and join the computation.
We also prove that for any deterministic adaptive
algorithm~$\cA$ and an input graph containing a forest
with~$m$ edges there is a strategy of an adversary that
forces~$\cA$ to perform $\Omega(m\log m)$ steps on~$G$. 
This shows that the time performance $\Theta(m\log m)$, in
terms of~$m$, is optimal among deterministic adaptive 
algorithms in the dynamic adversarial model.
\end{enumerate}

\Paragraph{Related work.}

Early work on multiple-access channels has concentrated on
distributed protocols to handle bursty traffic of packets 
carrying dynamically generated messages. 
It included the development of protocols like Aloha~\cite{Abramson-TIT85} 
and Ethernet~\cite{MetcalfeB76}.
A survey by Gallager~\cite{Gallager85} covers the results up to 1985,
recent papers include those by 
Goldberg et al.~\cite{GoldbergMPS-JACM00}, H\aa stad et al.~\cite{HastadLR-SICOMP96},
and by Raghavan and Upfal~\cite{RaghavanU-SICOMP98}.
A multiple-access channel can be viewed as a special case 
of a multi-hop radio network. 
It is actually single-hop, since one step is
sufficient to have a message delivered between any two stations.
For an overview of work on communication in radio networks,
including the single-hop ones, see a survey by Chlebus~\cite{Chlebus-chapter-2001}.

In static problems on multiple-access channels, we assume 
that the input has been given to the stations at the
start of a protocol.
One of the most natural such problems is that of {\em selection,}
where some among the $N$~stations are given messages 
and the goal is to have any of them heard on the channel.
Willard~\cite{Willard-SICOMP86} developed a randomized protocol for this
problem that operates in expected time~$\cO(\log\log N)$, 
for a channel with collision detection.
Solving this problem requires expected time $\Omega(\log N)$, 
if detection of collision is not available, as was shown by Kushilevitz and Mansour~\cite{KushilevitzM-SICOMP98}, hence there is an exponential gap between
these two models.
In the {\em all-broadcast\/} problem we are asked to have all the
messages, given to some stations selected among the $N$~ones,
heard on the channel.
If collision detection is available then
this can be done deterministically with a logarithmic overhead
per message  by an algorithm of Koml\'{o}s and  Greenberg~\cite{KomlosG-TIT85}, 
which is optimal as shown by Greenberg and Winograd~\cite{GreenbergW-JACM85}.

G{\k a}sieniec et al.~\cite{GasieniecPP-JDM01} studied the wakeup
problem, which is a dynamic version of the selection problem.
The goal again is to have a successful transmission as soon as
possible, but the timing of stations joining the
protocol is controlled by an adversary.
Paper~\cite{GasieniecPP-JDM01} shows how efficiency depends on various levels of
synchrony, and it compares randomized and deterministic
solutions.
Other related work on radio networks has been done by Jurdzi\'nski 
et al.~\cite{JurdzinskiKZ-PODC02} and by Jurdzi\'nski and Stachowiak~\cite{JurdzinskiS02}.

All the problems mentioned above concern the communication itself
on a multiple-access channel.
Some research has also been done concerning distributed
algorithmics for specific combinatorial or optimization
problems when the underlying communication is
implemented by a multiple-access channel.
Martel and Vayda~\cite{Martel-IPL94,MartelV88} studied the problem of finding the
maximum value among those stored at a subset of stations.
Chlebus et al.~\cite{ChlebusKL-DC06} and
Clementi et al.~\cite{ClementiMS02} considered the problem of
performing a set of independent unit tasks, when the stations may
fail by crashing.
This problem is called Do-All, it was first studied in a
message-passing model by Dwork et al.~\cite{DworkHW98}.

Distributed algorithms finding minimum-weight spanning trees
have been studied before in other models.
The most popular among them assumes that the processors
are vertices and that the communication links are edges 
of an input graph. 
Such a setting  provides a unique combination in which the underlying 
communication network is also an input.
The problem of finding a minimum-weight spanning tree in such a
model was first proposed by Gallager et al.~\cite{GallagerHS-TPLS83}.
Awerbuch~\cite{Awerbuch87-STOC} developed an algorithm working in time
$\cO(n)$, where $n$ is the number of vertices.
Garay et al.~\cite{GarayKP-SICOMP98} found a solution  with
performance proportional to the diameter, if the diameter is
sufficiently large. 
Faloutsos and Molle~\cite{FaloutsosM95} studied trade offs between the time
and the number of messages.
Lower bounds on the time have been given by Peleg and
Rubinovich~\cite{PelegR00} and by Lotker et al.~\cite{LotkerPP06}.
Other graph problems studied for such a distributed setting include 
finding maximal matchings, a deterministic algorithm has been 
developed by Ha\'n\'ckowiak et al.~\cite{HanckowiakKP99},
and edge coloring, a randomized algorithm has been given by
Grable and Panconesi~\cite{GrableP97}.
An algorithm finding $k$-dominating sets was given by Kutten and
Peleg~\cite{KuttenP98}.
The issues of locality in distributed graph algorithms have been
studied by Linial~\cite{Linial92}, see the book by Peleg~\cite{Peleg2000-book}
for a comprehensive coverage.

\section{Technical Preliminaries}

We consider distributed algorithms performed by stations
that communicate over a broadcast network.
See the books by Bertsekas and Gallager~\cite{BertsekasG92} and 
Tanenbaum and Wetherall~\cite{TanenbaumW10} for systematic overviews of communication
networks.
The computations are synchronous, all the stations have access to
a global clock.

\Paragraph{Adversaries.}
The stations are categorized at each step 
as either {\em active\/} or {\em passive,} and
only active stations participate in the computation.
A passive station may be {\em activated\/} at  any step, 
then it changes its status and becomes active.
Such decisions, concerning which stations to activate and when 
to do this, are made by an {\em adversary}.
In a {\em static\/} scenario the active stations are stipulated at the
start of an execution, and are not changed later by the
adversary.
In a static setting we do not mention the adversary at all.
A {\em dynamic\/} scenario involves an adversary who can decide
on the timing when each passive station is activated 
in the course of an execution.

\Paragraph{Multiple-access channel.}

The broadcast operation is implemented on a {\em multiple-access
channel.}
All the stations receive the same information from
the channel at each step, unless they are passive, this
information is said to be {\em heard\/} on the channel.
The basic property of the channel is that if only one
station attempts the broadcast operation at a step, then its message is
delivered to all the active stations by the end of the step.
We assume that the size of a message that can be heard in
one step is as large as required by the algorithm. 
In particular, in Section~\ref{adversarial-environment} the
algorithm relies on the property that a single message can
carry a list of an arbitrary subset of the set of all the
stations.

Multiple-access channels come in two variants: either {\em
with\/} or {\em without collision detection.}
In the former case, if more than one stations broadcast at a step
then all the stations can hear the {\em collision noise\/} on the
channel.
This signal is distinct from the {\em background noise,}
which is heard when no station performs a broadcast at a step.
Both kinds of noise signals are received as indistinguishable
if the channel is without collision detection. 
A multiple-access channel is a single-hop radio network, 
in the terminology of radio networks~(see~\cite{Chlebus-chapter-2001}).

In this paper we work with the collision-detection variant.
This simplifies algorithms, and we do not lose much generality, 
at least in the static case. 
Collision detection can be implemented without affecting 
the asymptotic performance, provided that the number~$m$ of edges is 
large enough.
To this end, it is sufficient first to elect a leader among the active
stations. 
Then the consecutive steps of an execution are used 
depending on their parity:
the algorithm uses the even steps $2i$, and the stations
scheduled by it to broadcast at step $2i$ repeat this at step
$2i+1$, but the leader always broadcasts a dummy message at odd steps.
Hence if no signal is received at two consecutive steps $2i$ and
$2i+1$, then this means that there was a collision at step $2i$,
since the leader's attempt to broadcast failed.
Otherwise there is no collision, because 
the dummy message of the leader is heard at step~$2i+1$,
while nothing was heard at step $2i$. 
Selecting a leader, in the static case, can be achieved 
in time $\cO(\log n)$ by a deterministic tree-like algorithm, 
see~\cite{Chlebus-chapter-2001} for a survey of solutions of related problems in
radio networks, and \cite{JurdzinskiKZ-PODC02} for a recent work.
In the dynamic case the problem of selecting a leader is
essentially equivalent to the wakeup problem, and requires time
$\Omega(m)$, as shown by G{\k a}sieniec et al.~\cite{GasieniecPP-JDM01}.

\Paragraph{Graphs and algorithms.}

We consider graph problems, the inputs are simple graphs,
possibly with weights assigned to the edges.
The goal is to find a maximal spanning forest 
of the input graph, of the smallest weight in the case of
weighted graphs.
A spanning forest~$T$ is maximal in graph~$G$ 
if adding a new edge from~$G$ to~$T$ creates a cycle.
Throughout the rest of this paper, when we refer to a spanning forest, 
we always mean a maximal one.
The size~$|T|$ of a spanning forest~$T$ is defined to be the
number of edges in it.

Let~$n$ be the number of vertices of the input graph.
This number~$n$ is assumed to be known by all the stations.
A vertex of an input graph is identified by a number in the 
interval $[1..n]$, and an edge is identified as a pair of such
numbers. 
Each station holds a single edge.
Additionally, each station is assigned a unique {\em identifier,}
referred to as its {\em ID,} which is a positive integer.
We assume that the IDs of stations form a contiguous segment
of integers $[1..m]$, but the number $m$ is not assumed to
be known by the stations at the start of an execution.
The assumption of contiguity is relied upon when we consider 
the case of sparse graphs in a static scenario,
it makes an exhaustive enumeration efficient.

We require that all the edges of the forest sought are 
{\em revealed\/} by being heard on the channel.
After the edges of a spanning subgraph have been revealed,
and this subgraph has as many connected components as the input graph, 
then its spanning forest can be determined by some direct rule.
For instance, if there are no weights on edges, 
the first spanning forest in a lexicographic ordering of sets of edges
may be designated as the output, and if there are weights on edges, 
only spanning forests of the smallest weight are considered.
We abstract from such specific rules when presenting algorithms,
since our main concern is the communication involved in
revealing a sufficiently large subgraph with as few
transmission attempts as possible.

Typically, when a station performs a broadcast, the message
contains only its input edge.
An algorithm proceeds as a sequence of {\em queries,}
each specifies the stations that are to attempt to broadcast
at the given step.
More precisely, a query is a list including IDs of stations and/or edges. 
A station with its ID equal to~$p$ attempts to broadcast its edge on 
the channel at step~$i$ if it is {\em specified by the query $Q_i$}. 
This means that either the ID~$p$ is in $Q_i$ or the
station holds an edge that is in $Q_i$.
We often refer to a station with its ID equal to~$p$ as 
``the station~$p$,'' and to a station holding some edge~$e$ as
``the station~$e$.''

An algorithm is {\em adaptive\/}  if each query $Q_{i+1}$
depends on the feedback heard on the channel when the preceding
queries $Q_1,Q_2,\ldots,Q_i$ were executed.
An algorithm is {\em oblivious\/}  if the queries $Q_1,Q_2,\ldots,Q_i,\ldots$ are all known in advance, and depend only on the number~$n$ of vertices of the input graph.
More precisely, a query of an oblivious algorithm may be
given in terms of properties of edges of a graph on $n$~vertices, 
for instance by specifying some of the endpoints, 
or refer to specific IDs of stations.
Randomized algorithms, that we develop, are also described by queries,
but each station specified by a query additionally first
performs a random experiment, and then actually broadcasts 
only if the result of the experiment was a success.
Such experiments are Poisson trials, that is, they return either
success or failure, which are independent over the steps and the
stations, the probability of success at a step is the same
for all the stations, these probabilities may vary over the
steps.

\section{Simple Graphs}

Graphs considered in this section are simple and  
they do not have weights assigned to the edges.
The input is given at the start of an execution,
and no adversaries are involved.

\subsection{Deterministic Adaptive Algorithm}
\label{daa}

We use a routine {\sc Resolve\,}($S$) to resolve conflicts 
among a set of stations~$S$ that want to reveal one among their edges
or one among their IDs.
We discuss the case of edges in detail, the case of IDs is similar.
The procedure is based on a binary-search paradigm.
Let us fix a linear ordering $I$ of all the possible edges of a
graph with vertices in~$[1..n]$, for instance the lexicographic
one.
First all the stations (whose edges are) in~$S$ broadcast
simultaneously.
If this results in silence then $S$ is empty and {\sc 
Resolve\,}$(S)=\emptyset$ is completed.
If an edge $e$ is heard then the set~$S$ is a singleton and the conflict is
resolved, with {\sc Resolve\,}$(S)=\{e\}$.
Otherwise a collision is detected.
Let $I_1$ and $I_2$ be a partition of $I$ into left and right
subintervals, respectively, determined by the median of~$I$.
In the next step all the stations in $S\cap I_1$ broadcast.
If this results in a single edge~$e$ heard on the channel then {\sc
Resolve\,}$(S)=\{e\}$.
If there is a collision then $I_1\cap S$ is searched recursively,
with set~$I_1$ replacing $I$.
If there is silence then $I_2\cap S$ is searched recursively,
with $I_2$ replacing $I$.
It takes $\cO(\log m)$ steps to resolve a conflict and hear the
smallest edge held by a station in~$S$.


\begin{figure}[t]

\rule{\textwidth}{0.5pt}

\begin{center}
\begin{minipage}{\pagewidth}
\addtolength{\baselineskip}{3pt}

{\sc Input:} the number $n$ of graph vertices ; edge $e_p$ ;

{\sc Initialization:}  
	{\sc Revealed} $:=\emptyset$, 
	{\sc Cycle} $:=\emptyset$, 
	$e_p$ is waiting ;

\F

{\sf repeat}

\T {\sf if} $e_p$ is waiting {\sf then} broadcast a dummy message ; 

\T {\sf if} silence was heard in the previous step 
{\sf then} terminate ; 

\T $\{e\} = $ {\sc Resolve}\,(\,{\sc Waiting}\,) ;

\T move edge $e$ from {\sc Waiting} to {\sc Revealed} ;

\T {\sf if} there is a cycle in the graph induced by $e_p$ 
and the edges in {\sc Revealed} 

\TT {\sf then} move $e_p$ from {\sc Waiting} to {\sc Cycle} 

{\sf until} termination ;

{\sc Output:} all the revealed edges.

\end{minipage}
\end{center}

\rule{\textwidth}{0.5pt}

\centerline{\parbox{\captionwidth}{
\caption{
\label{fig-simple-deter}
Algorithm {\sc DetSimple}. 
Code for the station $p$ storing the edge~$e_p$.
} } }
\end{figure}


\Paragraph{Basic algorithm.}
We start with the algorithm {\sc DetSimple}. 
It operates by iterating phases in a loop.
At all times the input edges are partitioned into three subsets: 
\begin{quote}
\B
\begin{description}
\item[{\sc Revealed:}]
the edges already revealed on the channel, each of them is
called {\em revealed} ; 
\item[{\sc Cycle:}] 
the edges that would make a cycle if added to those in
{\sc Revealed}, each of them is called {\em cycle} ;
\item[{\sc Waiting:}] 
the remaining edges, called {\em waiting} .
\end{description}
\B
\end{quote}
Each of the waiting edges could be added to {\sc Revealed} and still 
the property that {\sc Revealed} is a forest would be maintained.
A station holding a revealed, cycle or waiting edge is 
called a revealed, cycle or waiting station, respectively.
Initially the sets {\sc Revealed} and {\sc Cycle}  are empty, 
and the set {\sc Waiting} consists of all the input edges.
During one iteration, the procedure {\sc Resolve\,}({\sc Waiting}) 
is called, and the edge eventually heard on the channel is added to 
{\sc Revealed}.
Each of the remaining waiting stations checks to see 
if it is now in {\sc Cycle}, and if this is the case,
then it will never attempt to perform a broadcast in this
execution.
A pseudocode of the algorithm is given in Figure~\ref{fig-simple-deter}.

\Paragraph{Correctness.}
The correctness of algorithm {\sc DetSimple} is guaranteed by the following
invariant maintained in each iteration: 
the edges in {\sc Revealed} make a forest on the set of vertices~$[1..n]$.

\Paragraph{Performance.}
The overall cost of algorithm {\sc DetSimple} to find a
spanning forest $T$ is $\cO(|T|\log m)$ since 
{\sc Resolve} is called  $|T|$ times.

\Paragraph{General algorithm.}
We give an algorithm that works in time $\cO(m)$ 
if $m=o(n\log n)$,  it operates as follows.
The algorithm {\sc DetSimple} is run during the odd-numbered steps.
During the even-numbered steps, the stations
broadcast their edges on the channel one by one, 
in the order of their IDs. 
If there is a silence heard during an even-numbered step, 
then this is interpreted as a termination signal, since all
the stations have revealed their edges by this step.
The two processes, run during the odd-numbered and the
even-numbered steps, do not affect one another, in 
particular the edges broadcast in the even-numbered steps 
are not treated as revealed in algorithm {\sc DetSimple}.
The following theorem follows directly from the design of
this general algorithm:

\begin{theorem}
\label{thm-1}
There is a deterministic adaptive algorithm for the static model
that finds a spanning forest~$T$ of a simple graph in time 
$\cO(\Min{m}{|T|\log m})$.
\end{theorem}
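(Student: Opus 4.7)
The plan is to verify that the general algorithm described just before the theorem achieves the stated bound, by separately accounting for the two interleaved processes and then showing how their termination conditions combine.

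First I would analyze the basic algorithm {\sc DetSimple} run on its own. The correctness invariant (that {\sc Revealed} is always a forest) together with the rule that edges migrate from {\sc Waiting} to {\sc Cycle} as soon as they would close a cycle ensures that the set {\sc Waiting} eventually contains only edges that either already belong to the final spanning forest $T$ or will be pushed into {\sc Cycle} by later revelations. Hence the main loop executes exactly $|T|$ productive iterations, plus one final iteration where {\sc Waiting} is empty and silence signals termination. Each iteration costs $\cO(\log m)$ because {\sc Resolve} performs binary search over the lexicographic ordering of at most $\binom{n}{2} = \cO(m^2)$ possible edges, which is still $\cO(\log m)$ steps. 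This gives the $\cO(|T|\log m)$ bound.

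Next I would account for the parallel enumeration process. During the even-numbered steps the stations broadcast in ID order, contributing one successful transmission per step; since the IDs form $[1..m]$, silence on even step $2k$ means every station with ID $\le k$ has already had its turn and no further station exists, terminating the whole procedure in $\cO(m)$ even steps. The two processes are independent because the edges heard during the even-numbered steps are explicitly not promoted to {\sc Revealed}, so {\sc DetSimple}'s state evolves exactly as it would in isolation; conversely, the enumeration is oblivious and does not depend on the channel feedback from odd steps.

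Finally I would combine the two stopping rules: the algorithm halts at the first step at which either process signals termination. Since {\sc DetSimple} halts after $\cO(|T|\log m)$ odd steps and the enumeration halts after $\cO(m)$ even steps, the global running time is at most twice the minimum of the two, i.e.\ $\cO(\Min{m}{|T|\log m})$. The output, namely the edges revealed in {\sc Revealed} by {\sc DetSimple}, is a spanning forest of the input graph by the correctness invariant.

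The only subtle point, and the one I would dwell on most carefully, is that termination of one process must not cut off the other prematurely: one has to argue that when the enumeration terminates at time $\cO(m)$, {\sc DetSimple} has already revealed a full spanning forest, and symmetrically that when {\sc DetSimple} terminates, the remaining edges in {\sc Waiting} are exactly the cycle-forming ones and can be safely abandoned. Both follow from the invariant, but this is the step where I would be most careful to avoid a hidden $\log m$ overhead.
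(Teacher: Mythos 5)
Your proposal is correct and follows the same route as the paper, which simply asserts that the theorem ``follows directly from the design'' of the interleaved algorithm (\textsc{DetSimple} on odd steps, ID-ordered enumeration on even steps); you have merely filled in the details the paper leaves implicit, including the independence of the two processes and the combination of their stopping rules.
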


\subsection{Randomized Algorithm}


\begin{figure}

\rule{\textwidth}{0.5pt}

\begin{center}
\begin{minipage}{\pagewidth}
\addtolength{\baselineskip}{3pt}

{\sc Input:} edge $e_p$ ;

\F

{\sc Initialization:} $a := 1$ ; edge $e_p$ is waiting ;
{\tt counter\,} $:= 0$ ;

\F

{\sf repeat}

A: \ {\sf if} $e_p$ is waiting {\sf then} broadcast it with 
		the probability $1/a$ ;

B: \ {\sf begin case } 

\TT (a) the edge $e_p$ was heard:
change the status of $e_p$ to revealed ;

\TT (b) an edge distinct from $e_p$ was heard: 
check if $e_p$ is now cycle, \\
\TTT\TTT\TTT and if so then change its status to cycle ;

\TT (c) collision was heard:  set $a := 3a$ ;

\TT (d) silence was heard: set $a := \Min{a/3}{1}$ ;

\T {\sf end case} ;

\T {\tt counter\,} $:=$ {\tt counter\,} $ + 1$ ;

C: \ 
{\sf if} $p=$ {\tt counter\,}  {\sf then} broadcast $e_p$ ;

\T {\sf begin case}

\TT (a) the edge $e_p$ was heard:
if $e_p$ is waiting 

\TTT\TTT then change its status to revealed ;

\TT (b) an edge distinct from $e_p$ was heard:
check if $e_p$ became cycle, 

\TTT\TTT and if so then change its status to cycle ;

\T {\sf end case} ;

D: \  {\sf if} $e_p$ is waiting {\sf then} broadcast a dummy message ;

{\sf until} silence was heard in step D ;

\F
{\sc Output:} the set of the revealed edges.

\end{minipage}
\end{center}

\rule{\textwidth}{0.5pt}

\centerline{\parbox{\captionwidth}{\caption{\label{fig-simple-rand}
Randomized algorithm {\sc RandSimple} for simple graphs without weights.
Code for the station~$p$ that stores the input edge~$e_p$. 
}}}
\end{figure}

We present a randomized algorithm {\sc RandSimple} that finds
a spanning forest in a graph in expected time proportional to its size.
The algorithm is similar in structure to the deterministic one, 
and we use the same terminology.
The main difference and advantage is that the expected time 
to reveal a waiting edge is constant.

A pseudo code of the algorithm is in Figure~\ref{fig-simple-rand}.
The algorithm uses a variable denoted~$a$, which is maintained 
by all the stations, to approximate the number of waiting stations 
during an execution. 
The stations broadcast with the probability~$1/a$.
They update the estimate $a$ if silence or noise is heard. 
If an edge is heard, then all the sets {\sc Revealed}, {\sc
Waiting} and {\sc Cycle} are updated as in the algorithm 
{\sc DetSimple}.

\Paragraph{Correctness.}
The correctness of algorithm {\sc RandSimple} follows from the following 
two facts. 
First, the output is a superset of a spanning forest.
This follows from the observation that as long as a~spanning forest
has not been found, then there are still some waiting edges.
Secondly, the algorithm terminates in the worst-case time
$\cO(m)$.
This is because in each iteration a specific edge is
verified directly if it is waiting, exhausting all of them in a
systematic way, except possibly for the last iteration.
When the counter variable attains the value equal to the maximum ID 
of a station and its edge is broadcast in step~C,  
then it was the last waiting edge, hence silence is heard in step~D
and the algorithm terminates.

\Paragraph{Performance.}
We will show that expected time of the algorithm 
is proportional to the number of edges in the obtained forest.

Let a {\em round\/} denote one iteration of the main loop of
the algorithm.
An execution of algorithm {\sc RandSimple} can be partitioned 
into phases, each comprising a sequence of consecutive full rounds, 
and finished by an iteration in which (at least one) edge 
in a forest has been broadcast.
An additional closing round, in which the algorithm terminates,
is also possible.
For the purpose of the probabilistic analysis 
we assume conservatively that no edges are ever broadcast 
successfully during step~C.

Let $\cW_i$ denote the set of waiting edges in the beginning
of the $i$th phase. 
In particular, the set~$\cW_1$ consists of all the~$m$ input edges. 
Let $a_i$ denote the value of variable~$a$ at the start 
of the $i$th phase; 
this number is interpreted as an approximation of the size~$|\cW_i|$ 
of the set~$\cW_i$.
The value stored in the variable~$a$ at the start of the $j$th~round 
of phase~$i$ is denoted as~$a_{i,j}$.
We say that the algorithm {\em has a good approximation\/} 
if the inequalities 
\[
\frac{1}{3}\,|\cW_i| \leq a_{i,j} \leq 3\, |\cW_i|
\]
hold.
We first estimate the probabilities of silence, noise and 
a successful broadcast, respectively, heard on the channel just after step~A 
in a single round of a phase.

\begin{lemma} 
\label{l:silence}
There is a constant $c_1>\frac{1}{2}$ such that if $a>3 |\cW_i|$,
then the probability of silence in a round of phase~$i$ is at least~$c_1$.
\end{lemma}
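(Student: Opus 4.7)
The plan is to write down the probability of silence in step~A exactly and then invoke a single elementary inequality. Let $w = |\cW_i|$; by the paper's conservative convention that no edge is ever broadcast successfully in step~C, the number of waiting stations at the start of step~A of any round of phase~$i$ is exactly $w$. Since each such station independently attempts to transmit with probability $1/a$, the probability that the channel is silent just after step~A equals $(1-1/a)^w$. If $w=0$ this is $1$, so from now on I assume $w\geq 1$, which also forces $a > 3$ from the hypothesis.

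Next I would apply Bernoulli's inequality $(1-x)^w \geq 1 - wx$, valid for integer $w \geq 1$ and $x \in [0,1]$, with $x = 1/a$. This yields $(1-1/a)^w \geq 1 - w/a$. The hypothesis $a > 3|\cW_i| = 3w$ immediately gives $w/a < 1/3$, so the probability of silence is strictly greater than $2/3$. Choosing $c_1 := 2/3 > \tfrac{1}{2}$ then witnesses the statement of the lemma.

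I do not expect a genuine obstacle here. The only two things to be careful about are: (i) identifying the correct value of the waiting population during a round within phase~$i$, which is where the conservative assumption about step~C is used so that the set of waiting edges stays equal to $\cW_i$ throughout all non-terminating rounds of the phase; and (ii) making sure Bernoulli's inequality is applied with a non-negative integer exponent, which is automatic since $w$ counts stations. A sharper constant approaching $e^{-1/3} \approx 0.716$ could be extracted by exponential estimates, but the $2/3$ bound already suffices.
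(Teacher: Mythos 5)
Your proof is correct and follows essentially the same route as the paper: both write the silence probability exactly as $(1-1/a)^{|\cW_i|}$ and then exploit the hypothesis $|\cW_i| < a/3$ via a one-line elementary estimate (the paper bounds $(1-1/a)^{a/3}\ge (3/4)^{4/3}$ by monotonicity, whereas you use Bernoulli's inequality to get the bound $1-|\cW_i|/a > 2/3$). Both yield a valid constant $c_1>\tfrac12$, so there is nothing to fix.
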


\begin{proof}
If $\cW_i\neq \emptyset$, then $a>3$.
Interpreting actions of stations as Bernoulli trials,
we obtain that the probability of silence is at least
\begin{eqnarray*}
\Bigl(1-\frac{1}{a}\Bigr)^{|\cW_i|} 
&\ge& 
\Bigl(1-\frac{1}{a}\Bigr)^{a/3} \\
&\ge& 
\Bigl(1-\frac{1}{4}\Bigr)^{4/3} \ ,
\end{eqnarray*}
which is larger than~$\frac{1}{2}$.
\end{proof}

\begin{lemma} 
\label{l:noise}
There is a constant $c_2>\frac{1}{2}$ such that if $a<|\cW_i|/3$, 
then the probability of noise  during a round in phase~$i$ is at
least~$c_2$.
\end{lemma}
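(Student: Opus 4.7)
The plan is to compute the probability of noise exactly in terms of $w:=|\cW_i|$ and the broadcast probability $p:=1/a$, and then to exploit the hypothesis $wp>3$. Since the $w$ waiting stations act in step~A as independent Bernoulli trials with success probability $p$, the number of stations that broadcast is Binomial$(w,p)$, and noise is precisely the event that this count is at least two. First I would write down the elementary identity
$$
(1-p)^{w}+wp(1-p)^{w-1}=(1-p)^{w-1}\bigl(1+(w-1)p\bigr),
$$
whose left-hand side equals $P(\text{silence})+P(\text{exactly one broadcast})=1-P(\text{noise})$. It then suffices to show that the right-hand side is strictly below $\tfrac12$.

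Next I would apply the standard inequality $1-p\le e^{-p}$ to obtain $(1-p)^{w-1}(1+(w-1)p)\le h\bigl((w-1)p\bigr)$, where $h(t):=(1+t)e^{-t}$. A one-line calculation gives $h'(t)=-te^{-t}\le 0$, so $h$ is monotonically decreasing on $[0,\infty)$. Since the algorithm maintains $a\ge 1$ (so that $p\le 1$ is a legitimate probability), the hypothesis $a<w/3$ translates to $(w-1)p=wp-p>3-1=2$. Monotonicity of $h$ then yields $h((w-1)p)\le h(2)=3/e^2$, which is strictly smaller than $\tfrac12$, so one can take $c_2:=1-3/e^2>\tfrac12$.

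The only subtlety worth flagging is the reliance on $p\le 1$: the argument for $(w-1)p>2$ subtracts $p$ from $wp>3$, so without the lower bound $a\ge 1$ maintained by the algorithm this subtraction would leave no useful gap above~$2$. Everything else is routine algebra together with the monotonicity of $(1+t)e^{-t}$, mirroring the structure of the preceding silence lemma.
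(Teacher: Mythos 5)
Your proof is correct and follows essentially the same route as the paper's: both arguments bound the probability of noise as one minus the probabilities of silence and of a unique broadcast, and show that this sum is below $\tfrac{1}{2}$ when $|\cW_i| > 3a$. The only real difference is organizational: you merge the two terms into $(1-p)^{w-1}\bigl(1+(w-1)p\bigr)$ and invoke monotonicity of $(1+t)e^{-t}$ at $t=2$ (which is where your correctly flagged reliance on $p\le 1$, i.e.\ $a\ge 1$, enters), whereas the paper bounds silence by $e^{-3}$ and unique success by $6e^{-3}$ separately, arriving at the constant $1-7e^{-3}$ in place of your $1-3e^{-2}$; both exceed $\tfrac{1}{2}$.
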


\begin{proof}
We start by estimating the probabilities of silence and a
successful broadcast.
The probability of silence is at most 
\[
\Bigl( 1-\frac{1}{a}\Bigr)^{|\cW_i|} \le
\Bigl( 1-\frac{1}{a}\Bigr)^{3a} \le e^{-3}\ .
\]
The probability of a successful broadcast is  that of an
exactly one success in a sequence of $|\cW_i|$ Bernoulli
trials, and is at most
\begin{eqnarray*}
|\cW_i|\,\frac{1}{a}\,\Bigl( 1-\frac{1}{a}\Bigr)^{|\cW_i|-1}
&=&
\frac{|\cW_i|}{a}\,\frac{a}{a-1}\,\Bigl( 1-\frac{1}{a}\Bigr)^{|\cW_i|}\\
&\le &
2\,\frac{|\cW_i|}{a}\,\Bigl(1-\frac{1}{a}\Bigr)^{a\,|\cW_i|/a}\\
&\le &
6\,e^{-3}\ .
\end{eqnarray*}
Since the probability of noise is~1 minus the probabilities
of silence and of a successful broadcast, we obtain the
following estimate as a lower bound:
\[
1-(e^{-3}+6e^{-3})=1-7\,e^{-3}\ ,
\]
and it is larger than~$\frac{1}{2}$.
\end{proof}

\begin{lemma} \label{l:broadcast}
There is a constant $c_3>0$ such that if both the inequalities 
$|W_i|/3\le a\le 3|\cW_i|$ hold then the probability of a
successful broadcast  during a round in phase~$i$ is at
least~$c_3$.
\end{lemma}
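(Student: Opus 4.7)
The plan is to follow the pattern established in the proofs of Lemma~\ref{l:silence} and Lemma~\ref{l:noise}, namely, to write the probability of a successful broadcast in step~A as the probability of exactly one success in $|\cW_i|$ independent Bernoulli trials with success probability~$1/a$, and then bound each of its two natural factors separately. That probability equals
\[
\frac{|\cW_i|}{a}\,\Bigl(1-\frac{1}{a}\Bigr)^{|\cW_i|-1}\ .
\]

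First I would use the upper bound $a\le 3\,|\cW_i|$ to conclude that the combinatorial factor is bounded below by a constant, namely $|\cW_i|/a\ge 1/3$. Next I would use the complementary inequality $|\cW_i|\le 3a$, which gives $|\cW_i|-1\le 3a$, to rewrite
\[
\Bigl(1-\frac{1}{a}\Bigr)^{|\cW_i|-1}\ge \Bigl(1-\frac{1}{a}\Bigr)^{3a}\ ,
\]
and then observe that the function $a\mapsto (1-1/a)^{3a}$ is non-decreasing in $a$ on $[2,\infty)$ and tends to $e^{-3}$ as $a\to\infty$, so it is bounded below by a positive absolute constant once $a\ge 2$. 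Multiplying the two lower bounds yields the required positive constant $c_3$.

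The main obstacle is the small-$a$ boundary case. Since the variable~$a$ takes values in the geometric sequence $1,3,9,\ldots$, the regime $|\cW_i|/3\le a\le 3|\cW_i|$ with $a=1$ forces $|\cW_i|\in\{1,2,3\}$, and for $|\cW_i|\ge 2$ with $a=1$ every station broadcasts, so step~A produces a collision rather than a successful broadcast. I would dispose of this case by restricting attention to $a\ge 3$ (which is guaranteed within a phase once any collision has occurred, in particular after the very first round whenever $|\cW_i|\ge 2$): for $a\ge 3$ the inequality $(1-1/a)^{3a}\ge (2/3)^9$ holds, and combined with $|\cW_i|/a\ge 1/3$ gives an explicit constant, e.g.~$c_3=\tfrac{1}{3}(2/3)^9$. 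The residual case $|\cW_i|=1$, where the probability of a successful broadcast equals $1$ for every admissible~$a$, is trivial and consistent with any such~$c_3$.
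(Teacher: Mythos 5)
Your proposal is correct and follows essentially the same route as the paper: both write the success probability as that of exactly one success in $|\cW_i|$ Bernoulli trials, bound the factor $|\cW_i|/a$ below by $1/3$ using $a\le 3|\cW_i|$, and bound the remaining factor below by a constant via $(1-1/a)^{|\cW_i|-1}\ge(1-1/a)^{3a}$. In fact you are more careful than the paper at the boundary: the paper concludes the expression is at least $1/(3e^3)$, which is not literally true for finite $a$ since $(1-1/a)^{3a}<e^{-3}$, whereas your use of monotonicity of $a\mapsto(1-1/a)^{3a}$ together with an explicit treatment of the small-$a$ cases yields a valid explicit constant.
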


\begin{proof}
We estimate the probability of exactly one success
in a sequence of $|\cW_i|$ Bernoulli trials:
\begin{eqnarray*}
\frac{|\cW_i|}{a}\,\Bigl( 1-\frac{1}{a}\Bigr)^{|\cW_i|-1}
&=&
\frac{|\cW_i|}{a}\,\frac{a}{a-1}\,\Bigl( 1-\frac{1}{a}\Bigr)^{|\cW_i|}\\
&\ge&
\frac{1}{3}\,\Bigl( 1-\frac{1}{a}\Bigr)^{3a}\ ,
\end{eqnarray*}
which is at least $1/(3e^3)$.
\end{proof}

We model an execution of the algorithm as a combination of
two random processes.
One is a discrete-time random walk with a retaining barrier:
the nonnegative integers are the possible coordinates of a
particle, the barrier is at the origin with the coordinate equal
to zero.
Each time the particle is at the origin then the second
process is started which is just a single Bernoulli trial.
The random walk terminates after the first success and is
restarted at some positive integer coordinate.
Details are as follows.

If the algorithm has a good approximation, then this is
interpreted as the particle being at the barrier.
If either the inequalities
\[
\frac{|\cW_i|}{3^{k+1}} \le a_{i,j} < \frac{|\cW_i|}{3^k}
\]
or the inequalities
\[
3^{k} \, |\cW_i| < a_{i,j} \le  3^{k+1} \, |\cW_i| 
\]
hold, for $k\ge 1$, then this is interpreted as the particle
being at the distance $k$ from the barrier.
If the particle is at the origin then a Bernoulli trial is
performed with the probability of success equal to the
number~$c_3$ of Lemma~\ref{l:broadcast}.
A success is interpreted as a successful broadcast, which
starts a new phase~$i$.
After a success, the particle is moved to the location
determined by the current $a_i$ and $\cW_i$, and
after a failure, the particle is moved to location~1.
If the particle is at the location with a coordinate~$k>0$,
then it moves either to $k-1$ or to $k+1$ in the next step,
according to the following rules. 
It moves to $k-1$ with the probability
$c_0=\Min{c_1}{c_2}>\frac{1}{2}$, where $c_1$ and $c_2$ are as in
Lemmas~\ref{l:silence} and~\ref{l:noise}.
It moves to $k+1$ with the probability~$1-c_0<\frac{1}{2}$.
All the random moves are independent of each other.

This underlying random-walk model captures the behavior of
the algorithm for the following two reasons.
First, the moves of the particle correspond to the modifications of
the variable~$a$ in steps~B(c) and~B(d) in Figure~\ref{fig-simple-rand}.
Second, any positive integer is always between two consecutive
powers of~3; take the number~$k$ such that $3^k\le |\cW_i|<3^{k+1}$, for
integer~$k\ge 0$, and if $a=3^k$ or $a=3^{k+1}$, then this
is reflected in the model by the particle located at the barrier. 
Our probabilistic analysis is based on the following property 
of such a random walk on a discrete axis with a barrier:
if a particle starts at location $\ell>0$, then the expected
time needed to reach the barrier is $\cO(\ell)$, where the
constant hidden in the notation~$\cO$ depends on the
probability~$c_0$, see~\cite{GrimmettS2001-book}.

\begin{theorem}
\label{t-randsimple}
Algorithm {\sc RandSimple} works in expected time proportional
to the size~$|T|$ of a spanning forest~$T$ it finds, 
and in the worst-case time $\cO(m)$.
\end{theorem}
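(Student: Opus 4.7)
The worst-case $\cO(m)$ bound follows directly from the termination argument in the correctness discussion: step~C cycles sequentially through IDs, so within $m$ rounds either every waiting edge has been individually verified or silence is heard in step~D, triggering termination. For the expected-time bound, my plan is to lift the analysis to the random-walk abstraction set up just before the theorem and argue by two ingredients: a per-phase hitting-time estimate and a telescoping bound on the accumulated upward jumps.

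The first ingredient is that the expected length of phase~$i$, conditioned on the particle being at location $\ell_i$ at its start, is $\cO(\ell_i + 1)$. The biased walk on non-negative integers with left-drift $2c_0 - 1 > 0$ reaches the origin from $\ell$ in $\cO(\ell)$ expected steps by a standard gambler's-ruin calculation. Once the particle is at the barrier, a Bernoulli trial of success probability $\ge c_3$ is attempted; allowing for short excursions back to state~$1$ caused by silence or noise at the barrier, the expected waiting time until a successful broadcast remains $\cO(1/c_3) = \cO(1)$ by the strong Markov property.

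The second and central ingredient is the bound $\sum_i \ell_i = \cO(|T| + \log m)$. At the instant phase~$i$ ends, the particle is at the barrier, so $a \le 3|\cW_i|$. The successful broadcast may then reclassify many waiting edges as cycle edges, shrinking $|\cW|$ to $|\cW_{i+1}|$ while leaving $a$ untouched, whence
\[
\ell_{i+1} \;\le\; 1 + \log_3 \frac{|\cW_i|}{|\cW_{i+1}|}\ .
\]
Summing over the (at most) $|T|$ successful phases makes the $\log$ terms telescope to $\log_3 m$; adding the initial discrepancy $\ell_1 = \cO(\log m)$ (from $a=1$, $|\cW_1|=m$) yields the stated bound.

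To finish, it remains to observe that $\log m = \cO(|T|)$: since each non-trivial component with $n_i \ge 2$ vertices contributes $n_i-1 \ge n_i/2$ edges to~$T$, the spanning forest touches at most $2|T|$ vertices, and hence $m \le \binom{2|T|}{2} = \cO(|T|^2)$. Linearity of expectation over the $|T| + \cO(1)$ phases then gives total expected time $\cO(|T|)$. The main obstacle I anticipate is making the coupling to the random-walk model rigorous: the abstraction treats silence, noise, and success as clean $c_1,c_2,c_3$ events, but in the actual execution these probabilities depend on the precise value of $a$ within its current bucket, and one must exploit the factor-$3$ slack in the ``good approximation'' definition together with the conservative assumption about step~C to ensure that the walk genuinely dominates the real dynamics.
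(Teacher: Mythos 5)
Your proposal follows essentially the same route as the paper's own proof: it works in the random-walk-with-barrier abstraction, bounds the per-phase time at the origin by $\mE[X]/c_3=\cO(1)$, bounds the accumulated displacements via the monotone decrease of $|\cW_i|$, and absorbs the $\log m$ term using $|T|=\Omega(\sqrt{m})$ (your $m\le\binom{2|T|}{2}$ is the same observation). If anything, your explicit telescoping bound $\sum_i \ell_i = \cO(|T|+\log m)$ is slightly more careful than the paper's stated $\sum L_i = \cO(\log m)$, which omits the $\cO(1)$-per-phase additive terms, but this does not change the conclusion.
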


\begin{proof}
Let $L_i$ be the distance from the origin in the beginning
of phase~$i$.
Then $\sum L_i = \cO(\log m)$ because the sizes of the set 
of waiting edges are monotonically decreasing.
It follows that the expected total time spent by the particle between
the beginnings of phases and then reaching the origin for
the first time is~$\cO(\log m)$.
This amount of time is not asymptotically more than the size
of the spanning forest, which is~$\Omega(\sqrt{m})$.

Let $X$ be a random variable equal to the number of steps it
takes the particle to reach the barrier at the origin
after a start at position~1.
Let $Y$ be the number of attempts in a sequence of Bernoulli
trials, each with the probability~$c_3$ of success, before 
a success occurs.
Then the expected length of a phase is at most
\[
\sum_{k\ge 1} \,(k\cdot \mE [X]) \cdot \Pr (\,Y=k\,)
\, =\,
\mE [X]\cdot \sum_{k\ge 1} \, k \cdot \Pr (\,Y=k\,)
\, =\,
\frac{1}{c_3} \cdot \mE [X] \ ,
\]
which is $\cO(1)$.
This completes the proof for the expected performance,
since a phase contributes an edge in a spanning forest. 
The worst-case upper bound is straightforward.
\end{proof}

A spanning forest of a graph with $n$ nodes an $m$ edges 
has at most $\Min{m}{n-1}$ edges, which gives another
possible form of a performance bound following directly from
Theorem~\ref{t-randsimple}.


\subsection{Lower Bound for Simple Graphs}
\label{lb-oblivious}

In this subsection we prove a lower bound~$\Omega(m)$ on the number
of queries required by a deterministic oblivious algorithm, when
the input graph has no weights assigned to its edges.
This shows that there is a gap in the optimum performances
between adaptive and oblivious algorithms, among deterministic
ones.

Suppose $\cA$ is an algorithm and $m$ is the number of edges we
would like to be the size of an input.
To simplify exposition, we assume that the queries in~$\cA$
contain only edges and no IDs of stations. 
Our goal is to construct a connected graph $G=(V,E)$, with the set of nodes
$V=\{1,\ldots,n\}$, for a suitable number~$n$, 
and the set~$E$ of $m$~edges, such that 
there is a vertex $v\in V$ with the property that, for each vertex
$w\in V$ different from $v$, and any query $Q_i$, for $1\le i\le k$, 
we have $Q_i\cap E\ne \{(v,w)\}$. 
Such a graph would make the algorithm~$\cA$ perform more than $k$~steps,
because after these many steps no edge with~$v$ as its endpoint 
would have been heard. 

The following specification of an input graph for~$\cA$ 
is referred to as the {\em five-phase construction}.
Take a number~$n$ such that the inequalities 
$2n\le m\le n(n-1)/4$ hold.
Consider the first $k$~queries $Q_1,\ldots,Q_k$ of algorithm~$\cA$,
where $k=\lfloor\frac{m-n}{2}\rfloor$. 
Notice that $k=\Omega(m)$.
We start from a configuration such that, 
for each edge in the complete graph on $n$~vertices, 
there is a station holding this edge. 
We specify which of these stations, and hence also edges, 
should be removed, so that the number of the remaining edges is exactly~$m$.
Formally, we proceed through a sequence of five phases 
and maintain three sets of edges: $Q$, $F$ and~$T$.
The set~$Q$ is initialized to $Q:=\bigcup_i Q_i$,  
it contains all the edges that can appear in any of the queries. 
The set~$F$ is initialized to all the possible edges between the
$n$~vertices.
The set~$E$ of edges of the input graph~$G$, that we will identify 
in the course of the construction, will be stored in the set~$T$.

\Paragraph{\sf Phase 1} (preventing edges to be heard 
for as long as possible) 

\begin{quote}
\BB

{\tt stop} $:=0$ ; 

{\sf while} {\tt stop} $=0$ {\sf do} {\sf begin}

\T {\tt stop} $:= 1$ ; 

\T {\sf if} there is a $Q_i$ such that 

\TT \TT ( $|Q_i\cap F|=1$ ) {\sf and } 
( there is no isolated vertex in $F \setminus Q_i$ )

\T {\sf then} {\sf begin} $F:=F\setminus Q_i$ ; {\tt stop} $:=0$
{\sf end} ; 

{\sf end}
\B
\end{quote}

\noindent
{\em Comment:}
We remove from $F$ those edges that do not isolate any vertex
but would be heard if the algorithm asked about them and 
they were in the graph.
The purpose is to delay the step when the algorithm learns about 
a pair of vertices being connected. 
If it keeps asking about specific neighbors of a vertex, then
it hears nothing except for the last possible moment.
We keep passing through all the queries until we are sure that
the last modification has not created new singletons $Q_i\cap F$ 
that would be heard.
It follows that the graph spanned by the set of edges~$F$ is connected.

\Paragraph{\sf Phase 2} (initialization) 

\begin{quote}
\B
Set $T$ equal to a spanning tree of the graph spanned by the
set~$F$ of edges.
\end{quote}

\noindent
{\em Comment:}
This is just an initialization of the set $T$. 
We will keep increasing it during the next phases 
as long as its size is not large enough.

\Paragraph{\sf Phase 3} (adding collisions) 

\begin{quote}
\BB
{\sf for} $i:=1$ {\sf to} $k$ {\sf do}

\T {\sf if} $Q_i\cap T=\{e_1\}$ {\sf and} $|Q_i\cap F|>1$ {\sf then} 

\T \T{\sf begin}
choose $e_2\in Q_i\cap F$ such that $e_2\ne e_1$ ; 
$T:=T\cup \{e_2\}$
{\sf end} ;

\T {\sf if} $|Q_i\cap T|=0$ and $|Q_i\cap F|>1$ {\sf then} 

\T \T{\sf begin}
choose $e_1,e_2\in Q_i\cap F$ such that $e_1\ne e_2$ ; 
$T:=T\cup \{e_1,e_2\}$
{\sf end} ;

{\sf endfor}
\B
\end{quote}

\noindent
{\em Comment:}
Now the set $T$ makes the algorithm hear collisions at all the steps
when the set~$F$ does.

\Paragraph{\sf Phase 4} (increasing size) 

\begin{quote}
\BB
{\sf for} $i:=1$ {\sf to} $k$ {\sf do}

\T {\sf if} $|T|<m$ and $Q_i\cap T\ne \emptyset$ {\sf then}
{\sf begin }

\TT choose $X\subseteq Q_i\cap F$ of a maximal size 
such that $|T\cup X| \le m$ ; $T:=T\cup X$ 

\T {\sf end }
\B
\end{quote}

\noindent
{\em Comment:}
In this phase we increase the size of~$T$, by adding edges 
included in queries, to be as close as possible to~$m$.
This phase is productive if the set~$Q$ is large enough,
otherwise we still have Phase~5 as the last resort.

\Paragraph{\sf Phase 5} (padding with edges not mentioned)

\begin{quote}
\BB
{\sf If} $|T|<m$  {\sf then} add $m-|T|$ edges to $T$ that are not in $Q$.
\B
\end{quote}

\noindent
{\em Comment:}
This phase is needed in case algorithm~$\cA$ has the queries involving
only a small set of edges.

This completes the five-phase construction of the graph~$G$. 
We say that a vertex~$v$ is a {\em $Y$-witness,} for a set~$Y$ of edges,
if none of the edges~$(v,w)$ satisfies $Q_i\cap Y=\{(v,w)\}$, 
for any number~$i\le k$  and for any vertex~$w$ in~$V$ distinct from~$v$. 
Correctness of the five-phase construction 
is formulated as Lemma~\ref{l-witness}.

\begin{lemma}
\label{l-witness}
The five-phase construction results in obtaining a set~$T$ of
size~$m$ and with a vertex that is a $T$-witness.
\end{lemma}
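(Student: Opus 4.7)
My plan is to prove the two assertions of the lemma separately: that $|T|=m$ and that some vertex of $V$ is a $T$-witness. For the size bound I would just track $|T|$ through the phases. Phase~2 gives $|T|\le n-1$. Phase~3 processes each of the $k=\lfloor(m-n)/2\rfloor$ queries and adds at most two edges per query, so at most $2k\le m-n$ edges in total, leaving $|T|\le (n-1)+(m-n)=m-1$. Phase~4 never pushes $|T|$ beyond $m$ because of the maximality condition on $X$, and Phase~5 then pads $T$ up to exactly $m$ using edges from outside $Q$, so $|T|=m$.

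For the witness I would introduce the set $D\subseteq T$ of \emph{dangerous} edges, defined as those $e$ such that $Q_i\cap T=\{e\}$ for some $i\le k$. The key claim is that every $e\in D$ is incident to a vertex of degree~$1$ in $F$ (a \emph{leaf} of $F$). Two observations establish this. First, any edge of $Q_i\cap T$ must lie in $F$, because Phases~2--4 add only $F$-edges to $T$, while Phase~5 adds edges lying outside $Q$ and hence outside $Q_i$. Second, if $|Q_i\cap F|\ge 2$ then Phase~3 would already have forced $|Q_i\cap T|\ge 2$, contradicting the singleton property. Hence $Q_i\cap F=\{e\}$, and the stopping rule of Phase~1 then guarantees that removing $e$ from $F$ would isolate a vertex, i.e.\ at least one endpoint of $e$ has $F$-degree exactly~$1$.

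Next I would bound $L_0$, the set of leaves of $F$. Phase~1 removes at most one edge per query (once the unique $F$-edge of $Q_i$ is removed, the condition $|Q_i\cap F|=1$ can never hold again since $F$ only shrinks), so at most $k$ edges are removed overall. A vertex starts with degree $n-1$ in the complete graph and becomes a leaf only after losing $n-2$ incident edges; summing this over leaves and using that each removed edge contributes to the tally of at most two vertices yields $(n-2)|L_0|\le 2k\le m-n$, so $|L_0|\le (m-n)/(n-2)$. The standing assumption $m\le n(n-1)/4$ then gives $|L_0|<n/2$.

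Finally, since each leaf has only one incident edge in $F$, the map sending a dangerous edge to one of its leaf endpoints is injective, so $|D|\le |L_0|$. Consequently the set of endpoints of edges in $D$ has size at most $2|D|\le 2|L_0|<n$, so some vertex $v\in V$ is not incident to any dangerous edge, and this $v$ is the desired $T$-witness. The main obstacle, I expect, is the characterization of when $Q_i\cap T$ can be a singleton: it requires walking carefully through how $T$ changes across Phases~2--5 and verifying that Phase~3 successfully blocks every query with $|Q_i\cap F|>1$ from yielding a singleton in $T$, so that the set $D$ is controlled entirely by the Phase~1 stopping rule and hence by the leaf count.
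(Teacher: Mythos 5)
Your argument for the existence of a $T$-witness is correct but takes a genuinely different route from the paper's. The paper fixes a single witness vertex $v$ right after Phase~1, via a counting argument about how many edge removals are needed to deprive every vertex of the $F$-witness property, and then checks phase by phase that $v$ stays a $T$-witness (Phases~2--4 add only $F$-edges, Phase~3 destroys every singleton $Q_i\cap T$ with $|Q_i\cap F|>1$, Phase~5 adds edges outside~$Q$). You use exactly the same structural facts but defer all counting to the end: you characterize the dangerous edges as those $e$ with $Q_i\cap F=\{e\}$ surviving Phase~1, inject them into the set of $F$-leaves, and bound the leaves by $(n-2)\,|L_0|\le 2k\le m-n$, which with $m\le n(n-1)/4$ gives $|L_0|<n/2$ and hence fewer than $n$ endpoints of dangerous edges. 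This global pigeonhole is arguably cleaner and more transparent than the paper's ``isolate at least $n/2$ vertices'' estimate, and it avoids the repeated phase-by-phase preservation argument; the paper's incremental version, in exchange, exhibits a concrete witness already after Phase~1. All the individual steps you invoke (at most one removal per query in Phase~1, the degree-1 characterization from the stopping rule, Phase~3 blocking singletons when $|Q_i\cap F|>1$, Phase~5 edges being invisible to every $Q_i$) check out against the construction.

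The one real gap is in the size claim: you assert that Phase~5 ``pads $T$ up to exactly $m$ using edges from outside $Q$'' without verifying that enough such edges exist, and if $|Q|$ is close to $n(n-1)/2$ this is not automatic. The paper closes this by noting that $|Q\cap F|\ge |Q|-k$ after Phase~1 (each removal empties some query's intersection with the updated $F$), and that after Phase~4 either $|T|=m$ already or $T\supseteq Q\cap F$, so $|T|\ge |Q|-k$; hence at most $m+k-|Q|$ padding edges are needed, which fits inside the $n(n-1)/2-|Q|$ edges lying outside $Q$ because $m+k\le n(n-1)/2$ (with enough slack to also avoid the at most $n-1$ edges of $T\setminus Q$ coming from the Phase~2 tree). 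You should add this feasibility check; without it the padding step could in principle fail.
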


\begin{proof}
We examine the five phases one by one.

\Paragraph{\sf After Phase 1.}

Suppose to the contrary that there are no $F$-witness vertices.
We have $Q_i\cap F=\{e\}$ holds only if $F\setminus\{e\}$ has some 
isolated vertex. 
An edge~$(v,w)$, such that $Q_i\cap F=\{(v,w)\}$, is not
removed only if otherwise one among the vertices $v$ and $w$
would become isolated in~$F$.
Notice also that at least
\[
(n-1)+(n-2)+\cdots+(n-l)=l\cdot (2n-l-1)/2
\]
edges need to be removed in order to isolate $l$~or more vertices,
which follows by induction on~$l$.
Removing one edge results in the intersection of at least one query 
with~$F$ being empty.
We can remove at most $k$~edges. 
Should we like to restrain all the $n$~vertices from being  $F$-witnesses
we would need to isolate at least $n/2$~vertices, thus
needing to remove at least
$n/2 (2n-n/2-1)/2=n(\frac{3}{2}n-1)/4=\frac{3}{8}n^2-n/4$ edges from~$F$. 
Because of the inequality $k<n^2/8-n/2$, 
at least one vertex~$v$ has not been prevented from being an $F$-witness.

It follows that all the singleton intersections that contain 
an edge adjacent to vertex~$v$ have been removed from~$F$.
The inequality $|F|\ge |Q|-k$ follows by the fact that an edge is
removed from the set~$F$ only if at least one query~$Q_i$ becomes
disjoint with the updated~$F$.

\Paragraph{\sf After Phase 2.}

The tree~$T$ contains each edge~$(v,w)$
such that $Q_i\cap F=\{(v,w)\}$, for some~$i\le k$. 
This follows from the observation that such an edge $(v,w)$ 
has not been removed in Phase~1, hence one among the vertices~$v$ and~$w$ 
is of degree~1 in $F$, and the edge~$(v,w)$ was put into
set~$T$.
Clearly, $|T|\le n-1$, and hence there is a $T$-witness at
this point.

\Paragraph{\sf After Phase 3.}

Set $T$ has at most $n-1+2k<m$ edges. 
This is because  at most two edges can be added to the
set $T$ in each step $i\le k$.
The vertex~$v$ that is a $T$-witness
is the same as in the analysis of the previous phases.
More precisely, there was no edge~$(v,w)$ such that $Q_i\cap T=\{(v,w)\}$
in the beginning of Phase~3, the set~$T$ was not decreased during Phase~3, 
and no edge such that $Q_i\cap F=\emptyset$ was added in step~$i$. 

\Paragraph{\sf After Phase 4.}

We obtain set $T$ of a size that is between $|Q|-k$ and~$m$. 
This is because we can add to $T$ any number 
of edges from $F\setminus T$, as long as the inequality $|F|\ge |Q|-k$
holds.
There is a $T$-witness because during Phase~4 we add only 
edges in such queries that have at least one element in~$T$.

\Paragraph{\sf After Phase 5.}

The property of an existence of a $T$-witness is
maintained because Phase~5 does not interfere 
with intersections of $F$ and $Q_i$.
Since the size~$|T|$ was between $|Q|-k$ and~$m$, we can add
sufficiently many elements from the set 
$\{(v,w):v,w\in V, v\ne w\}\setminus Q$ 
of size $n(n-1)/2-|Q|\ge m+k-|Q|$ to obtain $|T|=m$.
\end{proof}

Lemma~\ref{l-witness} contains all the essential ingredients 
of a lower bound in terms of the number~$m$ of edges alone.
We show a more general fact formulated in terms of both the numbers
of nodes and edges.

\begin{theorem}
\label{thm-lower-bound-oblivious}
For any deterministic oblivious algorithm~$\cA$
finding spanning forests, and numbers~$n$ and~$m$, where~$m=\cO(n^2)$, 
there exists a connected graph $G_\cA$ 
with $n$~vertices and $\Theta(m)$~edges,
such that algorithm~$\cA$ requires time $\Omega(m)$ 
to find a spanning tree of~$G_\cA$.
\end{theorem}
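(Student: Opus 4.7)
The plan is to derive the theorem from Lemma~\ref{l-witness} by a short case analysis, exploiting the slack in the phrase ``$\Theta(m)$ edges.'' Lemma~\ref{l-witness} already produces, for every $m$ with $2n \le m \le n(n-1)/4$, a connected edge set $T$ of exactly $m$ edges on vertices $[1..n]$ together with a distinguished $T$-witness vertex $v$. By the definition of a witness, for every $i \le k = \lfloor (m-n)/2\rfloor$ the query $Q_i$ does not uniquely contain an edge of $T$ incident to $v$; since the spanning tree chosen in Phase~2 of the construction has every vertex of positive degree, $v$ is not isolated in $G_\cA := ([1..n], T)$, so any spanning tree of $G_\cA$ must contain an edge incident to $v$, and that edge has to be heard on the channel before $\cA$ can output its answer. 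Consequently $\cA$ needs more than $k = \Omega(m)$ steps on this $G_\cA$, which settles the central range $2n \le m \le n(n-1)/4$.

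For the remaining values of $m$ I would re-parameterize and invoke the lemma with
\[
m^{\ast} \;=\; \max\!\bigl(2n,\,\min\!\bigl(m,\,\lfloor n(n-1)/4\rfloor\bigr)\bigr),
\]
which lies in the valid range, and then verify that $m^{\ast} = \Theta(m)$ under the theorem's hypothesis $m = \cO(n^2)$ together with the implicit requirement $m \ge n-1$ (necessary for a connected graph on $n$ vertices to exist at all). The three cases are routine: if $m < 2n$ then $m = \Theta(n)$ and $m^{\ast} = 2n$; if $m$ lies in the central interval then $m^{\ast} = m$; and if $m > \lfloor n(n-1)/4\rfloor$ then $m = \Theta(n^2)$ and $m^{\ast} = \lfloor n(n-1)/4\rfloor = \Theta(n^2)$. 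The resulting graph on $n$ vertices has $m^{\ast} = \Theta(m)$ edges and forces $\cA$ to take at least $\lfloor (m^{\ast}-n)/2\rfloor = \Omega(m^{\ast}) = \Omega(m)$ steps.

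I expect no genuine mathematical obstacle; the delicate inequality used inside Lemma~\ref{l-witness}, namely $k < n^{2}/8 - n/2$, is automatically preserved by the choice $m^{\ast} \le \lfloor n(n-1)/4\rfloor$, so no further arithmetic is needed. The one minor point worth flagging is the assumption stated at the beginning of Section~\ref{lb-oblivious} that queries of the oblivious algorithm contain only edges and no station IDs; this is innocuous because the five-phase construction fixes a concrete edge set $E$ together with one station per edge, so any query referring to IDs can be translated, via the bijection between retained stations and their edges, into the edge formulation without altering which intersections are singletons, collisions or empty. The whole witness argument then goes through unchanged, and the theorem follows.
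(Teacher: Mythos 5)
Your proposal is correct and follows essentially the same route as the paper: the central range $2n\le m\le n(n-1)/4$ is handled via Lemma~\ref{l-witness} exactly as in the paper's proof, and the case $m>n(n-1)/4$ by the same clamping to $n(n-1)/4$. The only deviation is the case $m<2n$, where the paper exhibits a simple path of length $\min(n-1,m)$ rather than re-invoking the lemma with $m^{\ast}=2n$; both arguments work given the tacit assumption $m\ge n-1$.
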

\begin{proof}
If the inequality $2n\le m\le n(n-1)/4$ holds, then apply 
the five-phase construction. 
It follows from Lemma~\ref{l-witness} that if we take 
the final~$T$, obtained by completing the five-phase construction,
as the set of edges~$E$, then the oblivious algorithm~$\cA$ needs 
more than $k=\Omega(m)$ steps to broadcast a spanning tree 
of the graph~$G=(V,E)$. 
This is because a $T$-witness has not been 
heard by the $k$th step as an endpoint of an edge.

In the remaining cases of dependencies between the
numbers~$m$ and~$n$, we can proceed as follows.
If $n(n-1)/4 < m$, then we can use the same construction
as if the number~$m$ were equal to~$n(n-1)/4$.
If $m < 2n$, then a simple path of length $\ell=\Min{n-1}{m}$ does the
job, because any algorithm needs to broadcast at least $\ell$ 
edges to reveal the whole path.
\end{proof}

There is a simple oblivious algorithm that operates in time~$m+1$:
the station~$i$ broadcasts its edge at step~$i$.
Theorem~\ref{thm-lower-bound-oblivious} may be interpreted
as follows:
listing all the edges systematically is asymptotically optimal 
among oblivious algorithms. 

\section{Weighted Graphs}

In this section we consider graphs that have positive
weights assigned to their edges.
The input is specified at the start of an execution, 
no adversaries are involved.

\subsection{Randomized Algorithm}


\begin{figure}

\rule{\textwidth}{0.5pt}

\begin{center}
\begin{minipage}{\pagewidth}
\addtolength{\baselineskip}{3pt}

{\sc Input:} edge $e_p$ of weight $w_p$ ;

\F

{\sc Initialization:} 
edge $e_p$ is waiting ; 
\F

{\sf repeat}

\ \ \ \ \ $a := 1$ ; {\tt weight} := $+\infty$ ;

A:  {\sf repeat}

\TT {\sf if} ($e_p$ is waiting) and $(w_p<\text{\tt weight})$

\TT\TT{\sf then} broadcast $w_p$ with the probability $1/a$ ;

\TT{\sf begin case } 

\TTT (a) some weight $w$ was heard  :  set {\tt weight} $:= w$ ;

\TTT (b) collision was heard  :  set $a := 3a$ ;

\TTT (c) silence was heard   :  set $a := \Min{a/3}{1}$ ;

\TT {\sf end case} ;

\TT {\sf if} ($e_p$ is waiting) and ($w_p<\text{\tt weight}$)
{\sf then} broadcast a dummy message ; 

\T {\sf until} silence was heard in the previous step ;

B: \ {\sf repeat}

\TT {\sf if} ($e_p$ is waiting) and $(w_p = \text{\tt weight})$

\TT\TT{\sf then} broadcast $w_p$ with the probability $1/a$ ;

\TT {\sf begin case}

\TTT (a) the edge $e_p$ was heard  :  
change the status of $e_p$ to revealed ;

\TTT (b) an edge different from $e_p$ was heard : 
check if $e_p$ became cycle, 

\TTT\TTT and if so then change its status to cycle  ;

\TTT (c) collision was heard :  set $a := 3a$ ;

\TTT (d) silence was heard  :  set $a := \Min{a/3}{1}$ ;

\TT {\sf end case} ;

\TT {\sf if} ($e_p$ is waiting) and $(w_p = \text{\tt weight})$
{\sf then} broadcast a dummy message ;

\T {\sf until} silence was heard in the previous step ;

\T {\sf if} $e_p$ is waiting {\sf then} broadcast a dummy message ;

{\sf until} silence was heard in the previous step ;

\F
{\sc Output:} the set of the revealed edges.

\end{minipage}
\end{center}

\rule{\textwidth}{0.5pt}

\centerline{\parbox{\captionwidth}{
\caption{
\label{fig-weighted-rand}
Randomized algorithm {\sc RandWeighted} for simple graphs with weights.
Code for the station~$p$ that stores the input edge~$e_p$ with
weight $w_p$. 
}}}
\end{figure}


We present a randomized algorithm that finds a  minimum-weight 
spanning forest of a~graph. 
It operates in expected time $\cO(|T|+ W\log m)$, where
$W$ is the number of different weights assigned to the edges 
in~$T$.
The algorithm is called {\sc RandWeighted}, its
pseudocode is given in Figure~\ref{fig-weighted-rand}.

The algorithm is a generalization of {\sc RandSimple},
it uses the same categories of edges, and the same
variable~$a$ serving as a stochastic estimate of a set of
edges.
There is an additional variable {\tt weight} 
interpreted as an edge weight.
Algorithm {\sc RandWeighted} is a loop in which two inner
loops~A and~B are executed.
The purpose of the first loop~A is to find the smallest
weight of a waiting edge.
This loop terminates with the variable {\tt weight} storing
this value.
Then the next loop~B follows in which a maximal set of
edges is found such that each of these edges is of a 
weight equal to that stored in the variable {\tt weight}, and
all can be added to the current set of the revealed edges
with the property that it is a forest being preserved.

Similarly as before, an edge is waiting if it would not create 
a cycle if added to the revealed part of a minimum-weight forest.
We depart from the previous terminology in this section, in that 
an edge is said to be \emph{revealed} only if it was broadcast as 
a {\em minimum-weight\/} one among those that were waiting.
The remaining edges, broadcast during the selection in the
inner loop~A, are {\em not\/} categorized as revealed. 
This is in contrast with algorithms from the preceding sections,
where \emph{every} waiting edge broadcast successfully
became revealed immediately.

\Paragraph{Correctness.}
The algorithm {\sc RandWeighted} is essentially an implementation
on a multiple-access channel of a greedy minimum-spanning-forest algorithm
(see~\cite{CormenLRS09-3rd-edition-book}),
hence a set of the revealed edges is a minimum-weight spanning forest of 
the input graph.
The algorithm terminates with probability~$1$ because its
expected time is finite, as we show next.

\Paragraph{Performance.}
Our analysis of the behavior of algorithm {\sc RandWeighted}
is an extension of that for {\sc RandSimple}.

\begin{lemma}
\label{l-randweighted}
The randomized algorithm {\sc RandWeighted} finds a minimum-weight spanning
forest~$T$ in expected time $\cO(|T|+ W\log m)$,
where $W$~is the number of distinct weights on the edges
of~$T$.
\end{lemma}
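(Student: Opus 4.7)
The plan is to extend the random-walk machinery of the proof of Theorem~\ref{t-randsimple} to the outer-loop structure of {\sc RandWeighted} and to bound the outer-loop count by~$W$. Since the algorithm implements the greedy (Kruskal-style) minimum-spanning-forest procedure over a multiple-access channel, the outer loop fires exactly once for each of the $W$ distinct weights that appear in~$T$: in outer iteration~$j$, loop~A drives {\tt weight} down to the smallest weight $w^*_j$ of a currently waiting edge, and loop~B reveals all waiting edges of that weight which can be added to the forest. I would partition the total expected time as a sum over outer iterations of the expected cost of loop~A plus the expected cost of loop~B.

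For loop~B inside a single outer iteration, the argument is essentially a transcription of the one used for {\sc RandSimple}. The eligible set (waiting edges of weight equal to {\tt weight}) shrinks monotonically with each successful broadcast, so the associated random walk on the discrete axis with a retaining barrier satisfies $\sum L_i = \cO(\log m)$ within the loop, and by Lemma~\ref{l:broadcast} each successful broadcast costs $\cO(1)$ in expectation once the approximation is good. Hence loop~B in iteration~$j$ runs in expected time $\cO(|T_j| + \log m)$, where $T_j$ is the portion of $T$ revealed in that iteration; summed over $j = 1, \dots, W$ this gives $\cO(|T| + W\log m)$.

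For loop~A the same random-walk argument handles the $\cO(\log m)$ per-iteration setup caused by resetting $a := 1$ at the start of each outer iteration and makes each successful broadcast cost $\cO(1)$, so the only new task is to bound the expected number of successful broadcasts per loop~A. The key observation is that each eligible station broadcasts independently with the same probability $1/a$, so conditional on exactly one success occurring the identity of the successful broadcaster is uniform over the current eligible set. Consequently, the sequence of weights heard in one loop~A has the same distribution as the sequence of strict left-to-right minima in a uniformly random permutation of the initial eligible set, and the expected number of such minima is at most the harmonic number $H_N = \cO(\log m)$, where $N \leq m$ is the initial eligible-set size (duplicate weights can only reduce the count, since a later copy of an earlier weight is never a strict record). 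Thus loop~A contributes $\cO(\log m)$ per outer iteration, hence $\cO(W\log m)$ overall.

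The main obstacle is making the record-lows argument for loop~A precise: one has to verify that the uniform conditional distribution over the eligible set really persists from one phase to the next, even though the eligible set itself is updated after each successful broadcast, so that the $H_N$ estimate transfers correctly from the i.i.d.\ record setting to our adaptive one. Once this is in place, combining the loop~A and loop~B bounds yields the claimed expected running time $\cO(|T| + W\log m)$, and termination with probability~$1$ follows at once from the finiteness of the expectation.
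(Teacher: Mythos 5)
Your proposal is correct and follows essentially the same route as the paper's proof: one outer iteration per distinct weight of~$T$, expected cost $\cO(\log m)$ for loop~A (the random-walk catching-up of~$a$ plus a logarithmically bounded expected number of weight selections) and $\cO(1)$ expected per revealed edge in loop~B. The only difference is cosmetic: you bound the number of selections in loop~A by the expected number of strict left-to-right minima, i.e.\ the harmonic number $H_N=\cO(\log m)$, where the paper invokes the expected (logarithmic) height of a random binary-search tree --- two formulations of the same fact.
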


\begin{proof}
Consider an iteration of the main loop.
The purpose of the first inner loop~A is to find the minimum weight 
among all the waiting edges.
There are two phenomena here that occur concurrently.
One is a randomized binary search that governs the selection
of weights.
The expected number of selections made is $\cO(\log m)$,
which follows from the fact that the expected height of a
random binary-search tree is logarithmic in the number of its
leaves, see for instance~\cite{CormenLRS09-3rd-edition-book}.
The other phenomenon is similar to the behavior of algorithm
{\sc RandSimple}, as modeled by a discrete random walk in
the proof of Theorem~\ref{t-randsimple}.
After each new weight has been broadcast, there is an adjustment
needed to the variable~$a$ to catch up with the decreasing
size of the set of the edges of smaller weights, if there
are still any.
This corresponds to placing the particle at a place possibly
distant from the origin in the terminology of the proof of
Theorem~\ref{t-randsimple}.
The expected total time spent on such catching up is
$\cO(\log m)$, since the subsets keep decreasing.
Except for that, the expected time spent on producing 
a new edge is $\cO(1)$.
These two $\cO(\log m)$ bounds add up, and this is the cost of
producing the first revealed edge, of the minimum weight
among the waiting ones, when the second inner loop~B is
executed for the first time in the given iteration of the
main loop.
Loop~B may need to be repeated more times as long as there are
waiting edges of the same weight, but each of them is
produced with expected time~$\cO(1)$ per edge within the same
iteration of the main loop.
\end{proof}

\begin{theorem}
\label{thm-4}
There is a randomized algorithm that finds a minimum-weight spanning
forest~$T$ in expected time $\cO(|T|+ W\log m)$,
where $W$~is the number of distinct weights on the edges in~$T$,
and in worst-case time~$\cO(m)$.
\end{theorem}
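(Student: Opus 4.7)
My plan is to observe that the expected-time half of the theorem is exactly Lemma~\ref{l-randweighted}, and to augment {\sc RandWeighted} with a systematic enumeration that guarantees the worst-case bound $\cO(m)$, following the same device used in Section~\ref{daa} to pass from {\sc DetSimple} to the general algorithm of Theorem~\ref{thm-1}. Unlike {\sc RandSimple}, the pseudocode of {\sc RandWeighted} in Figure~\ref{fig-weighted-rand} has no built-in counter that enumerates stations, so without such a hedge there is no a~priori bound on how long an unlucky execution might last.

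Concretely, I would dedicate the odd-numbered steps to running {\sc RandWeighted} unchanged, and reserve the even-numbered ones for a round-robin broadcast: at step $2i$ the station with identifier~$i$, if any, transmits its edge together with its weight. The two processes operate on disjoint time slots, so their transmissions are silently ignored by the internal bookkeeping of the other; in particular the edges heard during the even-numbered steps do not trigger any update of the variables $a$, \texttt{weight}, or of the status of any edge inside {\sc RandWeighted}. The combined algorithm halts as soon as either subprocess declares termination~--- either {\sc RandWeighted} hears silence at its outermost terminating step, or silence is heard at an even-numbered step, which means that the counter has exceeded the largest station identifier and every input edge has already been broadcast.

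For the expected bound, each round of {\sc RandWeighted} now consumes two calendar steps instead of one, so its expected running time in the interleaved schedule is still $\cO(|T| + W\log m)$ by Lemma~\ref{l-randweighted}. For the worst-case bound, the round-robin enumeration finishes within $2m+2$ calendar steps, by which time every input edge, and in particular every edge of the minimum-weight spanning forest~$T$, has been successfully heard on the channel; every station, having received the complete weighted edge list, can then locally pin down $T$ by the usual tie-breaking rule, and the requirement that the edges of~$T$ be revealed on the channel is already met.

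I do not anticipate a real obstacle here. The only point worth double-checking is the non-interference of the two processes, but this is immediate: {\sc RandWeighted} reads the channel, updates~$a$ and \texttt{weight}, and draws its random bits only in response to events observed on odd-numbered steps, while the enumeration is purely deterministic and depends only on the integer-valued step counter, so the joint execution is well-defined and the analyses of the two components combine without correlation issues.
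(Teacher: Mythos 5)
Your proposal is correct and follows essentially the same route as the paper: interleave {\sc RandWeighted} on odd-numbered steps with a round-robin enumeration of edges and weights on even-numbered steps, keep the two processes independent, stop at whichever finishes first, and combine Lemma~\ref{l-randweighted} with the $\cO(m)$ bound from the enumeration. Nothing to add.
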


\begin{proof}
To guarantee the claimed worst-case performance,
we apply a similar stratagem as in algorithm {\sc RandSimple}
for simple graphs. 
We have all the edges broadcast on the
channel in a systematic way, until a forest of minimum
weight has been found or all the edges have been exhausted,
whatever comes first.
More precisely, let the algorithm {\sc RandWeighted} be run 
in odd-numbered steps, while in even-numbered steps 
the stations broadcast their edges and weights in order of their~IDs. 
Unlike algorithm {\sc RandSimple},
the processes should be independent of each other, in the sense
that the edges broadcast in the odd-numbered steps are not
categorized as revealed, cycle or waiting, until all have
been exhausted and the algorithm stops with all the edges
broadcasted.
This allows us to combine the expected performance of {\sc RandWeighted}
given in Lemma~\ref{l-randweighted} with a worst-case 
upper bound proportional to the number~$m$ of all the edges.
\end{proof}

\subsection{Lower Bound for Weighted Graphs}
\label{lb-weighted}

In this section we prove a lower bound $\Omega(m)$ 
for deterministic adaptive algorithms for weighted graphs.
The lower bound is of the same form as in Section~\ref{lb-oblivious},
the difference is that the algorithms are adaptive rather
than oblivious, while the graphs are weighted rather than simple.

For each deterministic adaptive algorithm~$\cA$, we construct 
a certain weighted graph $G_\cA$ of $n$~vertices and $m$~edges. 
We start with any assignment of edges to the stations at the
very beginning of computation, so that the 
graph $G=(V,E)$ has no isolated vertices. 
Let $\mathbb{N}$ be the set of positive integers.
In the construction we use only weights from set $\{1/j:j\in \mathbb{N}\}$.
Our goal is to assign weights to all the edges.
Each station~$i$ will have a set $A_i(t)$ of numbers in
$\{1/j:j\in \mathbb{N}\}$ assigned to it after step~$t$
of the construction. 
Initially we set $A_i(0)=\{1/j:j\in \mathbb{N}\}$,
for every station $i$.

Since there is a one-to-one correspondence between the edges and 
the stations, we treat $E$ also as a set of stations.
Denote by $E(t)$ the set of stations which have an edge with some
weight assigned to it by step $t$ of algorithm~$\cA$. 
Each station $i$ from $E(t)$ has just one element in~$A_i(t)$. 
We assume the following invariant after step $t$: 
for each station $i\in E\setminus E(t)$, the set $A_i(t)$ is infinite.

Consider step $t+1$ of algorithm~$\cA$, 
as determined by step~$t$ of the construction.
Let $S_0(t+1)$ denote the set of these stations $i$ that do not 
broadcast during step $t+1$ of algorithm~$\cA$, for 
an infinite number of possible weights from $A_i(t)$. 
Similarly, let $S_1(t+1)$ denote the set of stations that 
broadcast in step $t+1$, for infinitely many possible 
initial values. 

Step $t+1$ of the construction is broken 
into the following cases:

\noindent
{\sc Case 1:} $S_1(t+1)\setminus S_0(t+1)=\emptyset$. 

\noindent
Make set $A_i(t+1)$ contain all 
the weights from $A_i(t)$, for which station~$i$ does 
not broadcast during step $t+1$ of algorithm~$\cA$, 
for each station $i\in S_0(t+1)$. 
Hence $E(t+1)=E(t)$ and the invariant holds after step $t+1$.
Moreover, there are no new stations broadcasting on the channel, 
hence we can determine what happens in step $t+1$ by
considering only stations in~$E(t)$.

\noindent
{\sc Case 2:} $S_1(t+1)\setminus S_0(t+1)=\{i\}$. 

\noindent
We set $A_i(t+1)=\{1/\ell\}$, where $1/\ell$ is
the maximum weight in $A_i(t)$ for which station~$i$ broadcasts 
in step $t+1$ of algorithm~$\cA$. 
For the other stations $j$ in $E\setminus E(t)$, we set
$A_j(t+1)=A_j(t)\cap (0,1/\ell)$. 
In this case $E(t+1)=E(t)\cup \{i\}$ and the invariant holds
after step $t+1$. 
In step $t+1$ only station $i$ broadcasts,
among stations in $E\setminus E(t)$, hence by an argument similar 
to that used in Case~1, we have control over what happens in step~$t+1$. 

\noindent
{\sc Case 3:} $|S_1(t+1)\setminus S_0(t+1)|>1$. 

\noindent
We choose two different stations 
$i_1,i_2\in S_1(t+1)\setminus S_0(t+1)$ and decide on the sets to
be $A_{i_1}(t+1)=\{1/\ell_1\}$ and $A_{i_2}(t+1)=\{1/\ell_2\}$,
similarly as for vertex~$i$ in Case~2. 
For the other stations~$j$ in~$E\setminus E(t)$ we set 
$A_j(t+1)=A_j(t)\cap (0,1/\max(\ell_1,\ell_2))$. 
In step $t+1$ only stations~$i_1$ and~$i_2$ broadcast,
among all the stations in~$E\setminus E(t)$, hence, by a similar argument 
as in Case~1, we can decide on the events in step~$t+1$.

This construction gives the following result:

\begin{theorem}
For each adaptive deterministic algorithm~$\cA$ 
that finds a minimum spanning forest, and a possible number
of edges~$m$, there is a graph $G_\cA$ 
such that algorithm~$\cA$ terminates after $\Omega(m)$~steps
when given the graph~$G_\cA$ as input. 
\end{theorem}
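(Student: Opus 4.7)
My plan is to complete the five-case adversarial construction described in the excerpt and then combine two properties of it. First, I would establish the counting bound $|E(t)| \le 2t$, which says that at most two stations become fixed in any single step. Second, I would argue that the algorithm cannot correctly terminate until $|E(k)| = m$. These two facts together give $k \ge m/2 = \Omega(m)$.

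The counting bound follows directly from the three-case breakdown of each step of the construction, since Case~1 leaves $E$ unchanged, Case~2 enlarges it by one, and Case~3 enlarges it by two. I would prove it by induction on~$t$, simultaneously verifying the invariant that $A_i(t)$ is infinite for every non-fixed station~$i$. In Case~1 the invariant is preserved because we restrict $A_i(t+1)$ to the non-broadcasting weights, an infinite subset by the defining property of $S_0(t+1)$. In Cases~2 and~3 it is preserved because intersecting a set $A_j(t) \subseteq \{1/k : k \in \mathbb{N}\}$ with the half-line $(0, 1/\ell)$ removes only the finitely many elements that are $\ge 1/\ell$. In these cases I would also further restrict each $A_j$ for non-fixed $j$ (other than the one or two being newly committed) to weights for which $j$ does not broadcast in step~$t+1$; this keeps the set infinite while ensuring that the feedback heard in step~$t+1$ is determined solely by the stations in $E(t+1)$.

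The heart of the argument lies in the second property. Suppose the algorithm terminates after $k$ steps with a revealed set $R \subseteq E(k)$, and let $j^* \notin E(k)$ be any non-fixed edge. I would have the adversary complete the weight assignment by picking, for every non-fixed edge~$j$, a value $w_j \in A_j(k)$ strictly smaller than every weight already committed to a station in~$E(k)$; this is possible because $A_j(k)$ contains $1/r$ for arbitrarily large~$r$. Running Kruskal's algorithm on the completed weighted graph considers the non-fixed edges before any fixed edge, and any spanning forest of the non-fixed subgraph---which contains at least the edge~$j^*$---contributes at least one edge to the minimum-weight spanning forest. That edge lies outside $R$, contradicting the requirement that every edge of the output forest be heard on the channel. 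Hence $|E(k)| = m$, and combining this with $|E(k)| \le 2k$ yields the lower bound $k = \Omega(m)$.

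I expect the main obstacle to be rigorously justifying the consistency of the adversarial construction in Cases~2 and~3: one must verify that after the restrictions just described, the feedback heard in step~$t+1$ depends only on the weights of stations already in $E(t+1)$, and that the adversary's eventual choice of weights for non-fixed stations at time~$k$ is simultaneously consistent with every earlier round's feedback, so that the algorithm's deterministic trajectory up through step~$k$ is unchanged under the completion. Getting this bookkeeping right is what validates the contradiction derived in the third paragraph.
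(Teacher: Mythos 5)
Your proposal is correct and follows essentially the same route as the paper: the same counting bound $|E(t)|\le 2t$ from the three-case construction, combined with the observation that any still-unfixed edge can be forced into the minimum-weight spanning forest by giving it a weight below all committed ones, so termination before $m/2$ steps is impossible. Your explicit extra restriction of $A_j(t+1)$ to non-broadcasting weights in Cases~2 and~3, and the Kruskal-style completion argument, only make precise details that the paper's terser proof leaves implicit.
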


\begin{proof}
Consider the first $m/2$ steps and graph $G_\cA$ constructed as
described above. 
Our construction has the following properties.
First, in one step of construction at most two stations (edges) can have 
assigned weights; each among such stations $i$ is moved to $E(t)$ 
and has  $A_i(t)$ of unit size. 
Second, for any station~$j$ not in $E(t)$ we have $|A_j(t)|=\infty$, 
hence in the next steps some sufficiently small weight could 
be assigned to~$j$.
It follows, by induction on the number of steps~$t$ of algorithm~$\cA$ 
on graph $G_\cA$, that if $t<m/2$ then an edge with the minimum weight 
has not been broadcast successfully on the channel. 
\end{proof}

\section{Adversarial Environment}
\label{adversarial-environment}

In this section we consider dynamic graphs
with no weights assigned to their edges.
There is an adversary who is able to decide on timing when the
station holding a particular edge is activated.
An activated station is aware of being activated at the given
step, and of the number of the step, 
as counted by a global clock.
Multiple stations may be activated at a step, or none.
We assume that the global clock is started exactly at the
first step of the algorithm, and that at least one station
is active then.

The stations are activated by an adversary but they halt on
their own.
The issue of termination and correctness needs to be
clarified precisely, since an adversary might activate a
number of stations just before the stations already active
have decided to terminate.
We say that the algorithm {\em terminates\/} at the step 
when some station successfully broadcasts a special 
{\em termination signal.}
Our approach to correctness is based on disregarding the edges held
by the stations activated too late. 
To make this precise, we call the period of the last $c$~steps by
termination the {\em closing $c$~steps,} for any fixed 
integer~$c>0$.
An algorithm is said to be {\em $c$-correct,} for a positive
integer~$c$, if the output is a spanning forest 
for the input graph having the edges that are held by the 
stations activated before the closing $c$~steps.
An algorithm is {\em correct\/} if it is $c$-correct for
some integer~$c>0$ and a sufficiently large number~$n$ of vertices.
We consider only the $m$ edges held by the stations activated 
before the last $c$~closing steps, 
where parameter~$c$ is some fixed constant,
as required in the definition of correctness.

First we prove a lower bound for deterministic adaptive algorithms in 
this model.
This lower bound is strong in the sense that it holds for
arbitrary graphs containing a forest of a given size.

\begin{theorem}
\label{t-adversarial-lower}
For any positive constant~$c$ and an adaptive
deterministic $c$-correct algorithm~$\cA$ that is able to find 
spanning forests in simple graphs, and for any positive integer~$m$
that is sufficiently large depending on~$c$, 
and for any simple forest~$G$ with $m$~edges, 
there is such a strategy of an adversary to activate stations 
that forces algorithm~$\cA$ to perform $\Omega(m\log m)$ 
steps on the input~$G$ if run against the strategy.  
\end{theorem}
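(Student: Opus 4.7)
The plan is to build an adaptive adversary that, for the given deterministic $c$-correct algorithm $\cA$ and forest $G$ with $m$ edges, assigns the edges of $G$ to stations and schedules their activations so as to force $\Omega(m\log m)$ steps. Because $G$ is itself a forest, the spanning forest that must be output equals $G$, so every one of the $m$ edges must be heard individually on the channel; the adversary's job is to interpose $\Omega(\log m)$ uninformative queries (silence or collision) between successive successful broadcasts, which amortizes to the claimed $\Omega(m\log m)$ total.

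I would organize the strategy into $m$ phases, one per successful broadcast. Throughout a phase the adversary maintains a pool $P$ of already-active but not-yet-revealed stations, whose exact IDs and edge assignments remain flexible. For each query $Q_i$ issued by $\cA$, the adversary chooses, among outcomes consistent with at least one remaining extension of its commitments, an outcome of silence or collision whenever possible. Since each such outcome carries at most a constant amount of information about the flexible part of the assignment, a combinatorial bookkeeping argument in the spirit of the adversarial wakeup lower bound of G\k{a}sieniec, Pelc and Peleg~\cite{GasieniecPP-JDM01} shows that $\Omega(\log m)$ queries must elapse before $\cA$ can force a single specific edge onto the channel. At that point the phase ends, the adversary refills the pool from fresh activations, and the next phase begins. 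Summing over $m$ phases yields $\Omega(m\log m)$.

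The key bookkeeping issue is consistency across phases: at the end of the construction the adversary must produce an actual bijection between stations and edges of $G$, and an actual activation schedule, both compatible with every committed outcome and with the $c$-correctness definition. Two features of the setting help here. First, because $G$ is acyclic, no broadcast ever completes a cycle with previously revealed edges, so the algorithm extracts no structural information beyond which edges have been heard, and the adversary remains free to realize the broadcast sequence by any extendable bijection onto $G$. Second, because $m$ is not known to $\cA$ in advance, the adversary can always claim that any ID mentioned only in a query is held by a station that has not yet been activated, deferring the decision to a later phase without contradicting $c$-correctness for sufficiently large $m$.

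The main obstacle I expect is the quantitative analysis of how fast $\cA$ can shrink the adversary's pool given that collision detection is available. It is not obvious on the face of it that CD cannot be leveraged to reveal an edge in fewer than $\Omega(\log m)$ queries while the pool remains large. I would handle this by defining a potential function on the flexible portion of the adversary's commitments that drops by at most a constant factor per query, so that before the potential hits a fixed threshold no single outcome forces a unique active station and hence no revelation can occur; whenever the potential does reach the threshold, exactly one edge is forced out and the adversary immediately activates fresh stations to restore the pool for the next phase.
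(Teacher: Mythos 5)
Your overall architecture is sound and close in spirit to the paper's: both arguments rest on a deferred-decision adversary that keeps a set of stations indistinguishable to the algorithm and uses $c$-correctness to forbid early termination. But the proposal leaves its central quantitative claim unproved, and you flag this yourself: you never establish that the adversary's flexibility shrinks by at most a constant factor per query, which is exactly where the whole bound lives. The assertion that ``each outcome carries at most a constant amount of information'' fails as stated for the silence outcome: if a query covers almost all candidate edges, answering silence collapses the adversary's assignment space almost entirely, so the adversary must answer collision there, and you then need to argue that the collision branch is itself realizable and retains a constant fraction of the candidates. The concrete device that closes this gap (and is what the paper uses) is a majority rule applied separately to candidate station IDs and candidate edges: maintain a candidate set $C$, and on query $Q$ replace $C$ by the larger of $C\cap Q$ and $C\setminus Q$. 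If the kept side is $C\cap Q$, every eventually-chosen candidate is specified by $Q$ and, since at least two will be chosen, the feedback is a collision; if it is $C\setminus Q$, none is specified and they contribute silence. Since $|C|$ shrinks by at most a factor of $2$ per query, a pool of size $\Omega(m)$ survives as a set of at least two mutually indistinguishable candidates for $\lfloor\lg(m/4)\rfloor$ queries; that is precisely the potential function you were looking for, and it must be run on IDs and on edges in parallel because queries may name either.

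A second, structural remark. Your accounting is ``one phase per revealed edge, $\Omega(\log m)$ queries per phase,'' which forces you to worry both about whether $\cA$ can extract pool edges cheaply (e.g., by probing single IDs of active stations) and about carrying a consistent flexible bijection across all $m$ phases. The paper sidesteps both issues by not counting revelations at all: it cuts time into $\lfloor m/8\rfloor$ stages of length $\lfloor\lg(m/4)\rfloor$, and at the start of each stage activates between two and four fresh, previously passive stations chosen by the majority rule applied to that stage's queries, so that they remain unheard for the entire stage; since those edges must appear in the output of a $c$-correct algorithm (this is where acyclicity of $G$ is used) and $c<\lfloor\lg(m/4)\rfloor$, the algorithm cannot terminate before all stages elapse. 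Commitments are then local to a single stage, which makes the cross-phase consistency you worry about essentially trivial. I would either restructure along these lines or, at minimum, state and prove the halving lemma before relying on it.
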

\begin{proof}
For the number $m$ to be sufficiently large, 
the property $c<\lfloor \lg (m/4) \rfloor$ suffices,
as will be seen in the proof, where $\lg x$~is the logarithm of~$x$
to base~2. 
We partition an execution of algorithm~$\cA$ into $\lfloor m/8\rfloor$ 
stages, each lasting for $\lfloor \lg (m/4)\rfloor$ 
consecutive steps.
The adversary activates at least two, and up to four, stations holding
specific edges precisely at the start of each stage.
The chosen stations and edges are to have the following
property: none of these at least two edges is broadcast successfully 
during the stage when they are activated.

Consider the beginning of the $k$th~stage, 
where $k\le \lfloor m/8\rfloor$, and suppose that during 
the previous stages the execution has proceeded as required.
In particular, a total of at most $4(k-1)$ stations have been
activated, and none of the  edges activated 
$\lfloor \lg (m/4)\rfloor$ steps ago, when the
previous stage~$k-1$ started, has been heard by this step. 
The algorithm has not terminated yet because it is
$c$-correct, the inequality $c<\lfloor \lg (m/4) \rfloor$ holds and
the edges recently activated have to be in a spanning forest
of the input.

Prior to the beginning of a stage, we need a pool of passive
stations holding edges to choose the ones to activate from.
The specific fractions of~$m$ we use serve the purpose of the
proof for the following reason: each stage has at most four new edges
activated, for a total of $m/2$ edges after $m/8$ stages, 
and before each stage there are at least
$m-m/2=m/2$ edges to choose from.

Let the next consecutive queries be~$Q_i$, 
for $1\le i \le \lfloor \lg (m/4)\rfloor$.
They are determined uniquely if we assume that there are no
successful broadcasts of edges that have not been activated prior to
this point. 
This property will be guaranteed by the construction.
We proceed by considering sets $S_i$ and $E_i$, 
for $0\le i\le\lfloor\lg(m/4)\rfloor$.
Initialize the set~$S_0$ to the IDs of the stations still passive at
this point, and the set~$E_0$ to the edges held by these
stations.
The sets $S_i$ and $E_i$ are determined inductively.
Suppose $S_i$ and $E_i$ have already been determined, 
then $S_{i+1}$ and $E_{i+1}$ are defined as follows.
If $|Q_{i+1}\cap S_i| \ge |S_i|/2$, then set 
$S_{i+1} := Q_{i+1}\cap S_i$, otherwise set $S_{i+1} := Q_{i+1} - S_i$.
Similarly, if $|Q_{i+1}\cap E_i| \ge |E_i|/2$, then set 
$E_{i+1} := Q_{i+1}\cap E_i$, otherwise set $E_{i+1} := Q_{i+1} - E_i$.
The sets $S_i$ and $E_i$, for $i=\lfloor \lg (m/4)\rfloor$, 
contain at least two elements each.
Choose any two stations in the final set~$S_i$ and any two
stations in the final set~$E_i$ to activate in the next stage. 

When the last stage  has been completed, there are still 
at least two stations that have been activated prior to the last~$c$ 
steps and they hold edges that have to be in a spanning
forest of the graph with edges held by all the stations
active by this step. 
This means that the algorithm still needs to perform at
least two more steps.
\end{proof}

Theorem~\ref{t-adversarial-lower} can be strengthened to hold
for any graph that {\em contains\/} a forest of~$m$ edges, a proof
is a straightforward modification, details are omitted.


\begin{figure}

\rule{\textwidth}{0.5pt}

\begin{center}
\begin{minipage}{\pagewidth}
\addtolength{\baselineskip}{3pt}

{\sc Input:} the number $n$ of graph vertices ; edge $e_p$ ;

{\sc Initialization:}  
{\sc Revealed} $:=\emptyset$, {\sc Cycle} $:=\emptyset$, 
$e_p$ is waiting ;
\F

{\sf if} this is the start of an execution 

\T {\sf then} participate in {\sc Resolve} to elect a leader

\T {\sf else} {\sf begin} 

\TT wait for the first update message ;

\TT update {\sc Revealed} ; 

\TT adjust the status of $e_p$ 

\T {\sf end} ;

{\sf repeat}

\T {\sf if} $e_p$ is waiting {\sf then } broadcast a dummy message  ;

\T {\sf if} ( silence was heard in the previous step ) {\sf and}
( $p$ is a leader ) 

\TT {\sf then} broadcast a termination signal ;

\T $\{e\} := $ {\sc Resolve}\,(\,{\sc Waiting}\,) ;

\T move edge $e$ from {\sc Waiting} to {\sc Revealed} ;

\T {\sf if} ( $e_p$ is waiting ) {\sf and} 
		( there is a cycle in the graph induced by $e_p$ 

\T \TT \TT and the edges in {\sc Revealed} )

\TT {\sf then} move $e_p$ from {\sc Waiting} to {\sc Cycle} ; 

\T {\sf if} $p$ is a leader {\sf then} broadcast an update message ; 

{\sf until} a termination signal was heard

\F

{\sc Output:} all the revealed edges.

\end{minipage}
\end{center}

\rule{\textwidth}{0.5pt}

\centerline{\parbox{\captionwidth}{
\caption{
\label{fig-adversary-deter}
Algorithm {\sc DetAdversarial}.
Code for the station $p$ that holds the input edge~$e_p$.
} } }
\end{figure}


Next we give a deterministic adaptive algorithm {\sc DetAdversarial},
whose performance matches the lower bound of 
Theorem~\ref{t-adversarial-lower}.
The design principle under which it operates is similar to that of
the adaptive algorithm for the static case.
The main difference is that now new stations can wake up at
arbitrary steps and they need to be incorporated into an
execution.
When a station is activated after the start of an execution,
it pauses and listens to the channel 
until it hears a special {\em update message}.
This message carries a list of all the edges that have
been revealed in the course of the execution so far.
Such update messages are sent by one designated station, called a
{\em leader}.
The algorithm starts by having those stations that are active from the
very beginning select a leader among themselves.
This is achieved by running the procedure {\sc Resolve}, 
and using IDs in it rather than edges.
If a station~$p$ joins the execution at some point, after having
been activated, then it waits for the first update message, 
then sets {\sc Revealed} to the list obtained, and if there is a 
cycle in the graph induced by $e_p$ and the edges in {\sc Revealed},
then $e_p$ becomes a cycle.
A pseudocode of the algorithm is presented in 
Figure~\ref{fig-adversary-deter}.

\begin{theorem}
Algorithm {\sc DetAdversarial} finds a spanning forest~$|T|$ 
in time $\cO(|T| \log m)$ and is $2$-correct against any adversary.
\end{theorem}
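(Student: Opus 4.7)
The plan is to split the proof into a running-time bound and a correctness verification (including $2$-correctness), leveraging the structure already developed for {\sc DetSimple}. For the time bound, I would first observe that the initial leader election---implemented by invoking {\sc Resolve} on the IDs of the stations active at step~$1$---completes in $\cO(\log m)$ steps. Each subsequent iteration of the main {\sf repeat} loop then consists of (i)~a one-step dummy broadcast, (ii)~a one-step silence/termination check, (iii)~one call to {\sc Resolve}({\sc Waiting}) costing $\cO(\log m)$ steps and producing exactly one new revealed edge, and (iv)~a one-step update-message broadcast by the leader. Since every non-terminating iteration contributes a distinct edge to the output forest~$T$, the loop runs at most $|T|+1$ times, giving the claimed $\cO(|T|\log m)$ bound.

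For correctness, I would first maintain the invariant that {\sc Revealed} is a forest on $[1..n]$ at all times. This is preserved because only waiting edges are transferred into {\sc Revealed}, and a waiting edge is, by definition, one that would not close a cycle with the current {\sc Revealed}. Newly activated stations remain silent until the first update message arrives, at which point they install the broadcast {\sc Revealed} and reclassify their own edge $e_p$ as revealed, cycle, or waiting; the invariant and the shared notion of ``waiting'' therefore remain consistent across all active stations.

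The central part is the $2$-correctness argument. Let $T$ denote the step at which the termination signal is heard. The leader broadcasts that signal only after observing silence in the immediately preceding step, which must be the dummy-broadcast step opening the terminating iteration, namely step~$T-1$. The step~$T-2$ just before is therefore the update-message step closing the previous iteration. Any station activated at some step $t\le T-2$ thus hears (at the latest) the update broadcast at step~$T-2$, installs the current {\sc Revealed}, and classifies its edge before step~$T-1$ begins. If such a station's edge were still waiting at step~$T-1$, it would broadcast a dummy message, contradicting the observed silence. Hence every edge held by a station activated at step $\le T-2$ is either in {\sc Revealed} or is classified as cycle, so the revealed set is a maximal spanning forest of the graph induced by those edges, as required.

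The main obstacle I foresee is the timing bookkeeping for newly activated stations: one needs to pin down the step-ordering convention under which a station activated at step~$t$ receives a message broadcast at step~$t$ and can react by step~$t+1$, and to verify that stations still awaiting their first update message do not interfere with an ongoing {\sc Resolve} invocation (they simply remain passive until the next update). Once these conventions are fixed, both the correctness and the time analysis follow by the arguments above.
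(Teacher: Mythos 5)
Your proposal is correct and follows essentially the same route as the paper's (much terser) proof: the time bound comes from charging each of the $|T|$ calls to {\sc Resolve} at $\cO(\log m)$ steps, and $2$-correctness comes from observing that a station activated at least two steps before termination receives the update message broadcast as the last action of the previous iteration, so its edge would have broken the silence that triggers the termination signal. Your added bookkeeping (the explicit step-ordering of dummy broadcast, silence check, and update message, and the remark that not-yet-updated stations stay silent during {\sc Resolve}) only fleshes out details the paper leaves implicit.
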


\begin{proof}
It is checked whether there is any waiting edge 
just before a possible termination signal is to be broadcast.
A station that has been activated at least two steps before has a
chance to receive the update message that was broadcast as the
last action performed in the previous iteration of the main loop.
It takes $|T|$~calls of the procedure {\sc Resolve} to
contribute all the edges, each call takes the
time~$\cO(\log m)$.
\end{proof}


\section{Discussion}

This paper presents a study of the problem of finding
a minimum-weight spanning forest in a distributed setting, 
for the model when single edges are held by stations 
that communicate by broadcasting on a multiple-access channel.

We show that adaptive deterministic algorithms are more efficient
than oblivious ones, even for simple graphs without weights.
Finding the optimum performance of a deterministic adaptive 
algorithm for simple graphs is an open problem.
We claim that Theorem~\ref{thm-1} actually gives the best possible
bound.

We develop an optimal randomized algorithm for simple
graphs without weights.
It is an open problem if the performance of this algorithm can be
matched by that of a deterministic one.
We conjecture that this is not the case.

We also develop a randomized algorithm finding a  minimum-weight 
spanning forest~$T$ of a~graph in expected time
$\cO(|T|+ W \log m)$, where $W$ is the number of distinctive
weights on the edges of~$T$,
and show that any deterministic one requires time~$\Omega(m)$.
This shows that randomization helps for this problem, 
for sufficiently many edges.
The optimality of finding a minimum-weight spanning forest by a
randomized algorithm is an open problem. 

We develop a deterministic algorithm for an adversarial
environment that is time optimal, but its properties rely on
a possibly large size of a message broadcast in a single step.
An interesting problem is what is the optimum-time complexity
of the problem in an adversarial model with the size
of messages restricted so that each can carry up to a constant number 
of edges or IDs of stations?


\bibliographystyle{plain}

\bibliography{bogdan,books,distributed,networks}

\end{document}